\pdfoutput=1
\PassOptionsToPackage{pdftex,
pdfencoding=auto,
pdfnewwindow=true,
pdfusetitle=true,
bookmarks=true,
bookmarksnumbered=true,
bookmarksopen=false,
pdfpagemode=UseThumbs,
bookmarksopenlevel=1,
pdfpagelabels=true,
breaklinks=true,
colorlinks=true,
unicode=true,
pdfborder={0 0 0},
backref=false
}{hyperref}
\PassOptionsToPackage{usenames,svgnames,pdftex}{xcolor}
\documentclass[superscriptaddress,aps,prx,nofootinbib,twocolumn,twoside,reprint,letterpaper,longbibliography,showpacs,10pt,floatfix,tikz,amsmath,amsfonts,amssymb,dvipsnames]{revtex4-2}
\usepackage{xcolor}
\usepackage{amsthm,mathtools,bm,graphicx,verbatim,mathrsfs,units}

\usepackage[utf8]{inputenx}
\usepackage{bbm}

\usepackage{dsfont}
\usepackage{float}
\usepackage{cancel}
\usepackage{booktabs}
\usepackage{mathrsfs}
\usepackage{array}
\usepackage{tabularx}
\usepackage{booktabs}
\usepackage[alwaysadjust,flushleft,pointlessenum]{paralist}
\setdefaultenum{\bfseries1.}{}{}{}

\definecolor{myurlcolor}{rgb}{0,0,0.7}
\definecolor{myrefcolor}{rgb}{0.8,0,0}
\definecolor{purple}{RGB}{128,0,128}
\definecolor{ultramarine}{RGB}{63, 0, 255}
\definecolor{medblue}{RGB}{0, 0, 100}
\definecolor{googleblue}{RGB}{34, 0, 204}
\definecolor{panblue}{RGB}{0,24,150}
\definecolor{carmine}{RGB}{150, 0, 24}
\definecolor{gray}{RGB}{150, 150, 150}
\usepackage{hyperref} 
\hypersetup{
unicode=true,
bookmarksnumbered=false,
bookmarksopen=false,
breaklinks=false,
colorlinks=true,
linkcolor=carmine,
citecolor=googleblue,
urlcolor=panblue,
anchorcolor=OliveGreen}
\usepackage{amsthm}
\usepackage{amsmath}
\usepackage{amssymb}
\usepackage{graphicx}
\usepackage{color}
\usepackage{enumitem} 
\usepackage[normalem]{ulem}
\usepackage{comment} 
\usepackage{tikz}
\usetikzlibrary{calc,arrows.meta, shadows}
\definecolor{gateyellow}{RGB}{252,231,165}
\definecolor{trustedblue}{RGB}{190,222,245}
\definecolor{untrustedred}{RGB}{242,172,168}
\usepackage[caption=false]{subfig}
\usepackage{tikz}
\usetikzlibrary{shapes.geometric,shapes,positioning,shapes.misc}

\usepackage{pifont}
\usepackage{tabularx}
\usepackage{array}

\usepackage[nolist,withpage]{acronym}

\newtheorem{thm}{Theorem}
\newtheorem{theorem}[thm]{Theorem}

\newtheorem{proposition}[thm]{Proposition}

\usepackage{physics}

\begin{document}

\title{Quantum key distribution using generalized contextuality against post-quantum eavesdroppers}
\author{Daniel Centeno}
\email{dcentenodiaz@perimeterinstitute.ca}
\affiliation{Perimeter Institute for Theoretical Physics, Waterloo, Ontario, Canada, N2L 2Y5}
\affiliation{Department of Physics and Astronomy, University of Waterloo, Waterloo, Ontario, Canada, N2L 3G1}
\author{Roberto D. Baldij\~{a}o}
\affiliation{Perimeter Institute for Theoretical Physics, Waterloo, Ontario, Canada, N2L 2Y5}
\author{Maria Ciudad Alañón}
\affiliation{Perimeter Institute for Theoretical Physics, Waterloo, Ontario, Canada, N2L 2Y5}
\affiliation{Department of Physics and Astronomy, University of Waterloo, Waterloo, Ontario, Canada, N2L 3G1}
\author{Yujie Zhang}
\affiliation{Perimeter Institute for Theoretical Physics, Waterloo, Ontario, Canada, N2L 2Y5}
\affiliation{Department of Physics and Astronomy, University of Waterloo, Waterloo, Ontario, Canada, N2L 3G1}
\affiliation{Institute for Quantum Computing, University of Waterloo, Waterloo, Ontario Canada N2L 3G1}
\author{Pedro Lauand}
\affiliation{Perimeter Institute for Theoretical Physics, Waterloo, Ontario, Canada, N2L 2Y5}
\affiliation{Department of Physics and Astronomy, University of Waterloo, Waterloo, Ontario, Canada, N2L 3G1}
\affiliation{Institute for Quantum Computing, University of Waterloo, Waterloo, Ontario Canada N2L 3G1}
\author{Elie Wolfe}
\affiliation{Perimeter Institute for Theoretical Physics, Waterloo, Ontario, Canada, N2L 2Y5}
\affiliation{Department of Physics and Astronomy, University of Waterloo, Waterloo, Ontario, Canada, N2L 3G1}

\begin{abstract}
Bell nonlocality allows secret key to be certified without trusting the internal operation of the devices, and even against eavesdroppers constrained only by the no-signaling principle. We ask whether generalized contextuality, a notion of nonclassicality available in prepare-and-measure experiments on a single system, can play an analogous role. In our approach, Alice and Bob first establish operational equivalences among their local preparation and measurement procedures, which are then treated as theory-independent constraints that any physical intervention on the communication channel, including that of an adversary, must respect. Eve is otherwise unrestricted, that is, her intervention on the transmitted system is not assumed to be described by quantum theory. Within this adversarial model, we show that Eve's optimal guessing probability against individual attacks is the solution of a finite linear program, and hence that a worst-case asymptotic key rate follows from the observed statistics and the operational equivalences alone. We prove that contextuality is necessary for a positive key rate. However, we also show that it is not sufficient. Concretely, the (3,2)-parity-oblivious random access code, known to yield secret key against a quantum eavesdropper, yields none certifiable key rate against a post-quantum one. Conversely, operational equivalences with no analogue among no-signaling constraints can strictly increase the asymptotic key rate. We illustrate this with the prepare-and-measure version of the standard CHSH protocol with an aligned key-generating setting.
\end{abstract}

\maketitle

\section{Introduction}
\label{sec:introduction}

Quantum key distribution (QKD) provides a way of distributing secret information whose security is guaranteed by physical principles rather than by assumptions about the computational power of an eavesdropper. After several decades of theoretical and experimental development, QKD has become one of the most mature applications of quantum information \cite{gisin2002quantum,scarani2009security}. Yet the security of initial QKD protocols relies on a description of the physical devices used by the honest parties. Imperfections or unaccounted degrees of freedom in state preparation, transmission, or measurement can therefore open side channels that are not captured by the idealized security model. Device-independent quantum key distribution (DIQKD) addresses this problem by removing the need to characterize the internal operation of the devices, instead certifying secrecy from observable correlations (plus the assumption of the causal structure) alone \cite{acin2006bell,ekert1991quantum,masanes2011secure}. In this sense, DIQKD provides a particularly strong paradigm for cryptographic security: its security does not rely on trusting a model of the devices used to generate the key. 

A further attractive feature of the device-independent approach is that the relevant constraints need not be restricted to quantum mechanics. Security against an eavesdropper obeying only the no-signaling principle was established early in the development of DIQKD \cite{barrett2005no,scarani2006secrecy,acin2006bell,acin2006efficient}, showing that secret-key generation can remain possible even for an adversary possessing post-quantum physical resources. The essential ingredient in these protocols is the observation of Bell nonlocality, i.e., a sufficiently large violation of a Bell inequality certifies that the observed correlations can be used to establish secrecy~\cite{farkas2021bell}. This raises a broader question: Can we find other operational signatures of nonclassicality that can serve as a basis for secure communication beyond the quantum formalism?

Here we explore generalized contextuality as an operational principle to underlie the security of prepare-and-measure QKD\footnote{Contextuality in the Kochen--Specker sense had previously been considered as a cryptographic resource \cite{liu2025optimal,singh2017quantum,meng2025contextuality,cabello2011hybrid}.}. Generalized contextuality (hereafter also just called contextuality) defines classicality through the existence of a noncontextual ontological model consistent with the operational equivalences between physical procedures  \cite{Spekkens2005}. This applies to arbitrary operational procedures, including preparations, transformations, and measurements, allowing generalized contextuality to describe a broader class of experimental scenarios, such as those involving communication, that lie beyond the scope of Bell nonlocality. Moreover, Bell nonlocality is a special case in that any violation of a Bell inequality witnesses the failure of noncontextuality~\cite{wright2023invertible,baroni2026composable,tavakoli2020measurement, yadavalli2022contextuality}. Contextuality is thus a strictly broader signature of nonclassicality, and one that remains available in prepare-and-measure experiments on a single system, where no Bell inequality applies,  and to scenarios involving transformations and more general quantum processes~\cite{Spekkens2005,Schmid2024structuretheorem, Grabowecky2022, ZhangSchmidYingSpekkens2026}. In addition, contextuality has been used to show quantum advantage in information-theoretic tasks~\cite{schmid2018contextual,spekkens2009preparation,saha2019preparation}.

The first steps toward exploiting contextuality for key distribution have recently been taken \cite{troupe2015contextuality,chaturvedi2021characterising}. Our approach differs from previous contextuality-based QKD proposals in one central respect: we do not assume that the physical system transmitted through the communication channel is governed by quantum mechanics. Instead, in our approach, Alice and Bob first establish a set of operational equivalences (OPEQs) among their local preparation and measurement procedures. This can be done using self-consistent tomography~\cite{MazurekEtAl2021,Grabowecky2022, zhang2026entanglement}, which infers the relevant operational structure directly from the observed statistics without requiring a prior calibrated model of the devices. These equivalences are treated as theory-independent properties of the operational procedures, that is, they are required to hold for any physical test allowed by the operational theory, including tests performed by an adversary. Eve is consequently granted arbitrary control over the communication channel and is not restricted to quantum transformations. The only nonclassical structure retained in the security analysis is therefore the experimentally established set of OPEQs. The adversarial optimization can then be formulated directly over extensions of the observed statistics satisfying the relevant OPEQs, without imposing a quantum state space, quantum channels, or quantum measurements.

A conceptual analogy with DIQKD can be useful in making the assumptions of the two approaches explicit. In the Bell scenario, the causal structure associated with space-like separation motivates the no-signaling constraints. That is, security holds as long as these constraints are valid which in turn rests on the assumed causal structure. In the prepare-and-measure scenario, tomographically complete local laboratories allow Alice and Bob to establish operational equivalences between their respective preparation and measurement procedures. These OPEQs play an analogous role to no-signaling constraints, that is, both constitute theory-independent restrictions on observable statistics. Then, security holds as long as the OPEQs are valid, which in turn rests on the assumed tomographic completeness of the local characterization. This perspective allows us to separate the trusted operational structure from the physical theory governing the communication channel, and hence to ask whether contextuality can certify security against an adversary with resources even beyond those allowed by quantum theory.

Our assumptions also differ from conventional dimension-bounded semi-DI QKD: rather than bounding the Hilbert-space dimension of the communicated system, we assume experimentally characterized operational equivalences among the trusted laboratory procedures~\cite{PawlowskiBrunner2011,WoodheadPironio2015,Woodhead2016}. Beyond strict dimension bounds, other semi-device-independent prepare-and-measure approaches replace the dimension assumption by partial source characterization, for example bounds on the pairwise overlaps of Alice's preparations in receiver-device-independent QKD~\cite{Ioannou2022RDI,Ioannou2022RDIProtocols}, or by physically motivated constraints such as an upper bound on the mean photon number~\cite{VanHimbeeck2017}.

Using our framework, we derive a linear-programming characterization of the adversary's optimal guessing probability. The set of post-quantum extensions consistent with the observed statistics and with the operational equivalences forms a polytope, so the worst-case asymptotic key rate follows from a finite linear program. Among the scenarios we have studied, two are of particular interest: parity-oblivious random access codes (PORACs) and an aligned-key CHSH protocol. We find that the $(3,2)$-PORAC admits no positive key rate against an adversary constrained only by operational equivalences, even though a quantum adversary permits it~\cite{chaturvedi2021characterising}, and that the aligned-key protocol attains a strictly higher rate once all experimentally available equivalences are imposed than when only those mirroring the no-signaling constraints are used (which reproduces exactly the Bell-based rate against a no-signaling adversary~\cite{acin2006efficient}).

This establishes both a strict separation between quantum and post-quantum adversaries, the latter being strictly more powerful, and positive theory-independent key rates certified by operational equivalences that have no counterpart in the Bell scenario. Our results thus demonstrate that generalized contextuality can be used not only as a witness of quantum nonclassicality, but as an operational constraint from which cryptographic security can be analyzed without assuming quantum mechanics for the adversary. More broadly, our results connect the device-independent cryptographic paradigm with the operational framework of generalized contextuality, and identify operational equivalences as a possible alternative foundation for theory-independent cryptography.

The paper is organized as follows. Section~\ref{sec:contextuality} reviews generalized contextuality in the prepare-and-measure setting and introduces the operational equivalences used throughout the paper. Section~\ref{sec:adversary} specifies the adversarial model and the resulting optimization problem when the communication channel is not assumed to be quantum in addition to showing that contextuality is necessary for positive key rate within this model. Section~\ref{sec:results} presents an explicit example, the prepare-and-measure counterpart of the CHSH protocol with an aligned key-generating setting, in which imposing all experimentally available OPEQs yields a strictly higher rate than the one obtained from the no-signaling constraints alone. Section~\ref{sec:experimental} discusses the experimental requirements of the approach, in particular the characterization of the operational equivalences. We conclude in Sec.~\ref{sec:discussion} with a discussion of the assumptions, implications, and limitations of the contextuality-based approach.

\section{Contextuality revisited}
\label{sec:contextuality}

We consider the prepare-and-measure scenario illustrated in Fig.~\ref{fig:prepare_measure}. Alice and Bob operate spatially separated laboratories connected by a physical communication channel. In each round, Alice chooses a preparation procedure $P_x$, labelled by $x\in\mathcal{X}$, and sends the resulting physical system to Bob. Bob independently chooses a measurement $M_y$, with $y\in\mathcal{Y}$, and records an outcome $b\in\mathcal{B}_y$. Operationally, the experiment is specified by the conditional probabilities
\begin{equation}
    p(b|x,y) \equiv p(b|P_x,M_y).
    \label{eq:operational_stats}
\end{equation}
Note that there is no assumption about the physical theory describing the communicated system. In particular, the preparations do not need to be represented by quantum states, nor the measurements by POVMs. The objects $P_x$ and $M_y$ denote laboratory procedures, characterized only through
their observable statistics.

\begin{figure}[t]
\centering
\resizebox{0.7\columnwidth}{!}{%
\begin{tikzpicture}[very thick, line cap=round, line join=round,
  gate/.style={very thick,draw=black!85,
               drop shadow={shadow xshift=0.9mm,shadow yshift=-0.9mm,
                            opacity=0.35,fill=black}},
  trusted/.style={gate,fill=trustedblue},
  labbox/.style={rounded corners=10pt,fill=trustedblue!30,
                 draw=trustedblue!55!black,thick},
  labname/.style={font=\large\sffamily\bfseries,text=trustedblue!45!black}]

  \draw[labbox] (-2.6,-1.3) rectangle (1.55,2.1);
  \node[labname,anchor=north west] at (-2.45,2.0) {Alice};
  \draw[labbox] (2.05,-1.3) rectangle (7.55,2.1);
  \node[labname,anchor=north east] at (7.4,2.0) {Bob};

  \draw[trusted] (1.2, 1.4) -- (1.2, -0.6)
    .. controls (0.2, -0.6) and (-0.8, -0.2) .. (-0.8, 0.4)
    .. controls (-0.8, 1.0) and (0.2, 1.4) .. (1.2, 1.4) -- cycle;
  \draw (-2.2, 0.4) -- (-0.8, 0.4);
  \node[font=\LARGE] at (-1.5, -0.1) {$x$};

  \draw[trusted] (3.6, -0.5) -- (3.6, 1.5)
    .. controls (4.6, 1.5) and (5.6, 1.1) .. (5.6, 0.5)
    .. controls (5.6, -0.1) and (4.6, -0.5) .. (3.6, -0.5) -- cycle;
  \draw (5.6, 0.5) -- (7.15, 0.5);
  \node[font=\LARGE] at (6.7, 1.0) {$b$};

  \draw (1.2, 0.5) -- (3.6, 0.5);
  \draw (1.2, 0.3) -- (3.6, 0.3);

  \draw (2.35, -0.2) -- (3.6, -0.2);
  \node[font=\LARGE] at (2.95, -0.75) {$y$};

\end{tikzpicture}%
}
\caption{Prepare-and-measure scenario. Alice chooses a preparation $P_x$ in
her laboratory and communicates the resulting physical system to Bob, who
chooses a measurement $M_y$ and obtains outcome $b$.}
\label{fig:prepare_measure}
\end{figure}

A crucial concept of generalized noncontextuality are \emph{operational equivalences}: relations between laboratory procedures that cannot be distinguished by any experiment available within the operational theory~\cite{Spekkens2005}. For preparations, consider two randomized procedures described by probability distributions $\{\alpha_x\}_x$ and $\{\alpha'_x\}_x$. We write\footnote{Note that for our particular protocols, each preparation $P_x$ is operationally inequivalent to the rest (i.e. $P_x$ is inequivalent to $P_{x'}$ $\forall x\neq x'$). Thus, only convex combinations of them can be operationally equivalent.}
\begin{equation}
    \sum_x \alpha_x P_x
    \simeq
    \sum_x \alpha'_x P_x
    \label{eq:prep_opeq}
\end{equation}
whenever
\begin{equation}
    \sum_x \alpha_x p(b|x,y)
    =
    \sum_x \alpha'_x p(b|x,y)
    \qquad
    \forall\, b,y .
    \label{eq:prep_opeq_operational}
\end{equation}

Thus, no measurement can reveal which of the two randomized preparation procedures was implemented by Alice. An analogous definition applies to Bob's measurement events. Two randomized procedures described by probability distributions $\{\beta_{by}\}_{by}$ and $\{\beta'_{by}\}_{by}$ are operationally equivalent if they occur with the same probability for every allowed preparation. We say
\begin{equation}
   \sum_{b,y} \beta_{by} M^b_y\simeq \sum_{b,y} \beta'_{by} M^b_{y}
    \label{eq:meas_opeq}
\end{equation}
whenever
\begin{equation}
    \sum_{b,y} \beta_{by} p(b|x,y) = \sum_{b,y} \beta'_{by} p(b|x,y) \qquad \forall\, x .
\label{eq:prep_opeq_operational}
\end{equation}
An \emph{ontological model} seeks to explain these operational statistics in terms of an underlying physical state $\lambda\in\Lambda$, referred to as the \emph{ontic state}~\cite{Spekkens2005}; the ontic \emph{space} $\Lambda$ contains all the possible values $\lambda$ could have, and can in principle be continuous. A preparation $P_x$ is represented by a probability distribution $\mu(\lambda|P_x)$ over the ontic space, while a measurement event $[b|M_y]$ is represented by a response function $\xi(b|M_y,\lambda)$. The statistics can then be recovered via 
\begin{equation}
    p(b|x,y) = \int_{\Lambda} d\lambda\, \mu(\lambda|P_x)\, \xi(b|M_y,\lambda),
    \label{eq:ontological_model}
\end{equation}
together with the usual normalization and positivity conditions. The ontic state $\lambda$ can be viewed as a complete ontological specification of the physical system which in the present communication picture, is passed from Alice's laboratory to Bob's. 

The operational equivalences in Eqs.~\eqref{eq:prep_opeq} and \eqref{eq:meas_opeq} are statements about physical procedures and do not by itself constitute an assumption of noncontextuality. Generalized noncontextuality requires that such operational indistinguishability is reflected at the ontological level. In other words, operationally equivalent procedures must be represented identically in any noncontextual ontological explanation.

Therefore, \emph{preparation noncontextuality} means that the equivalence in Eq.~\eqref{eq:prep_opeq} implies
\begin{equation}
    \sum_x \alpha_x\mu(\lambda|P_x) = \sum_x \alpha'_x\mu(\lambda|P_x) \qquad \forall\,\lambda .
    \label{eq:prep_noncontextuality}
\end{equation}
Analogously, \emph{measurement noncontextuality} requires that the equivalence in Eq.~\eqref{eq:meas_opeq} implies
\begin{equation}
    \sum_{b,y}\beta_{by}\xi(b|M_y,\lambda) = \sum_{b,y}\beta'_{by}\xi(b|M_y,\lambda) \qquad \forall\,\lambda .
    \label{eq:meas_noncontextuality}
\end{equation}

We say that the observed prepare-and-measure experiment admits a generalized noncontextual ontological model if there exists a model of the form of Eq.~\eqref{eq:ontological_model} that reproduces the observed statistics $p(b|x,y)$ while respecting all specified operational equivalences, Eqs.~\eqref{eq:prep_noncontextuality} and \eqref{eq:meas_noncontextuality}, at the ontological level. If no such model exists, the experiment is contextual.

\section{Adversarial model}
\label{sec:adversary}

We now specify the adversarial model used in our security analysis. As in standard prepare-and-measure QKD, we regard the communication channel connecting Alice and Bob as completely untrusted. Eve may intercept the physical carrier leaving Alice's laboratory, perform an arbitrary transformation on it, retain arbitrary side information, and forward a physical system to Bob. Crucially, neither Eve's transformation nor the systems available to her are assumed to admit a quantum description. Our security model therefore places no restriction on the physical theory governing the communication channel beyond the operational constraints specified below.

\subsection{Assumptions}

As in any cryptographic protocol, we need some assumptions regarding the isolation of the honest players' laboratories from the adversary. Even in DIQKD, where the internal working of devices is not characterized, the devices cannot be allowed to maliciously communicate private inputs or outputs to Eve~\cite{acin2007device,Primaatmaja2023security}. In this sense, a distinction must be made between \emph{uncharacterized} and \emph{untrusted} device. Device independence removes the need for a microscopic model of the devices, but does not allow arbitrary information leakage from the honest laboratories~\cite{Primaatmaja2023security}. %

Here, we adopt an analogous cryptographic perspective, i.e., Alice's and Bob's laboratories are assumed to be trusted and isolated from Eve, in the sense that no unintended physical degree of freedom carrying private information leaks to the adversary. In particular, their choices ($x$ and $y$) are generated locally and are not available to Eve while she can act on the communicated system. Moreover, any other variable that influences Alice's preparation procedure or Bob's measurement procedure is either included in their local characterization or is independent of Eve. Note that this assumption does not restrict Eve's interaction with the carrier once it has left Alice's laboratory.

Within their trusted laboratories, Alice and Bob use tomographically complete sets of local tests to establish operational equivalences among their preparation and measurement procedures. The resulting operational equivalences are therefore taken to be genuine equivalences of the physical procedures implemented at the boundary of each laboratory. In particular, a preparation equivalence, see Eq.~\eqref{eq:prep_opeq}, asserts that the two preparation mixtures cannot be distinguished by any physically allowed test on the carrier leaving Alice's laboratory. Likewise, a measurement equivalence, see Eq.~\eqref{eq:meas_opeq}, characterizes indistinguishable outcome events of Bob's measurement apparatus for any physical system entering his laboratory. Tomographic completeness is doing two jobs here. First, it extends indistinguishability with respect to the implemented tests to all tests within the operational theory. Second, and crucially for security, it is assumed to hold with respect to all degrees of freedom of the carrier that cross the laboratory boundary, including any that Eve may access. That is, procedures that are indistinguishable according to the complete operational characterization of the honest laboratories cannot become distinguishable under any finer description available to Eve.

An important consequence of our assumptions is that the OPEQs must hold before averaging over any variable correlated with Eve. In particular, an uncharacterized variable influencing Alice's or Bob's local procedures cannot be hidden by averaging over its values, since Eve could condition her attack on that variable and distinguish procedures that appear operationally equivalent to the honest parties. Public or shared randomness is allowed provided it is included in the operational description and the relevant OPEQs hold conditioned on its value. We illustrate this distinction explicitly in Appendix~\ref{app:hidden_randomness}.

Therefore, once an operational equivalence has been established within the laboratory, it must be respected by any future physical intervention. Indeed, if $P\simeq P'$ but an allowed transformation $\mathcal{T}$ produced distinguishable procedures, $\mathcal{T}(P)\not\simeq\mathcal{T}(P')$, then composing $\mathcal{T}$ with a test that distinguishes its outputs would provide a test distinguishing $P$ from $P'$ in the first place. Hence,

\begin{equation}
    P\simeq P' \quad\Longrightarrow\quad \mathcal{T}(P)\simeq\mathcal{T}(P')
    \label{eq:opeq_preservation}
\end{equation}
for any physically allowed transformation $\mathcal{T}$ (note that  $P,P',\mathcal{T}$ are not necessarily quantum procedures). Notice that an analogous reasoning follows for the measurement operational equivalences. Thus, the eavesdropper is constrained by the OPEQs, rather than by quantum theory.

\subsection{Mathematical formulation}
\label{sec:adversarial_math}

We first describe how the raw key\footnote{The \emph{raw key} is the string of correlated symbols retained by Alice and Bob at this stage, before error correction and privacy amplification.} is generated. In each round, Alice chooses a preparation $P_x$ and Bob chooses independently a measurement $M_y$, obtaining an outcome $b$. After the measurement stage is completed, Bob publicly announces his setting $y$. For those pairs $(x,y)$ designated as key-generating rounds, Alice assigns the raw key value $K=\kappa(x,y)$, where the map $\kappa$ is specified by the particular protocol. Bob's outcome $b$ provides his corresponding estimate of the raw key value.

We now specify Eve's attack on this protocol. In each round, she may intercept the system leaving Alice's laboratory, apply an arbitrary transformation, retain an arbitrary nonclassical system, and forward a physical system to Bob. Since this intervention occurs before $y$ is announced, it must be independent of Bob's setting. Nevertheless, Eve may  postpone her final readout until after learning $y$, and choose the measurement on her retained system accordingly (see Fig.~\ref{fig:adversarial_model_2}). To distinguish these two stages explicitly, we denote by $y'$ the choice of Eve's final readout. The complete single-round behaviour is then described by
\begin{equation}
    p(b,e|x,y,y'),
    \label{eq:full_extension}
\end{equation}
where $e$ is the resulting outcome to guess Alice's raw key value $\kappa(x,y)$. Applied independently in each round, this defines the individual-attack model considered throughout this work (See Fig.~\ref{fig:adversarial_model_3}).

\begin{figure}[h]
    \centering
    \resizebox{1\columnwidth}{!}{%
    \begin{tikzpicture}[>=Latex,
  wire/.style={very thick},
  bundle/.style={very thick,double,double distance=2.4pt},
  gate/.style={very thick,draw=black!85,
               drop shadow={shadow xshift=0.9mm,shadow yshift=-0.9mm,
                            opacity=0.35,fill=black}},
  trusted/.style={gate,fill=trustedblue},
  untrusted/.style={gate,fill=untrustedred},
  lab/.style={font=\LARGE}]

\draw[trusted] (0,0.85) -- (-0.6,0.85)
      .. controls (-1.6,0.85) and (-1.6,-0.85) .. (-0.6,-0.85)
      -- (0,-0.85) -- cycle;
\draw[wire] (-3.1,0) -- (-1.35,0);
\node[lab] at (-2.8,-0.5) {$x$};

\draw[bundle] (0.02,0) -- (1.8,0);

\draw[untrusted,rounded corners=6pt] (1.8,-1.9) rectangle (4.4,1.9);
\node[font=\Huge] at (3.1,0) {$\mathcal{T}$};

\draw[bundle] (4.42, 1.45) -- (7.3, 1.45);
\draw[bundle] (4.42,-1.45) -- (7.3,-1.45);

\begin{scope}[shift={(7.3,1.25)}]
  \draw[trusted] (0,1.1) -- (0.55,1.1)
        .. controls (1.6,1.1) and (1.6,-1.1) .. (0.55,-1.1)
        -- (0,-1.1) -- cycle;
  \draw[wire] (1.32,0) -- (3,0);
\end{scope}
\node[lab] at (10,1.7) {$b$};

\begin{scope}[shift={(7.3,-1.25)}]
  \draw[untrusted] (0,1.1) -- (0.55,1.1)
        .. controls (1.6,1.1) and (1.6,-1.1) .. (0.55,-1.1)
        -- (0,-1.1) -- cycle;
  \draw[wire] (1.32,0) -- (3,0);
\end{scope}
\node[lab] at (10,-1.7) {$e$};

\draw[wire] (4.7,2.15) -- (7.3,2.15);
\draw[wire] (5.8,2.15) -- (5.8,-0.55) -- (7.3,-0.55);
\fill (5.8,2.15) circle (2.2pt);
\node[lab] at (4.85,2.55) {$y$};

\end{tikzpicture}
    }
    \caption{Adversarial prepare-and-measure scenario allowing Eve to have access to $y$ once the measurement of Bob is performed. The blue devices represent the trusted procedures and the red ones, the untrusted ones.}
    \label{fig:adversarial_model_2}
\end{figure}

The aim of the optimization of the adversarial attack is to bound the amount of secret key that can be extracted from the observed correlations. Denoting by $E$ Eve's classical readout, per Ref.~\cite{csiszar1978broadcast}, the asymptotic key rate is lower bounded by  %
\begin{equation}
    r \geq H(K|E,Y)-H(K|B,Y).
    \label{eq:rate}
\end{equation}

For the binary raw keys considered here, Eve's conditional Shannon entropy can be lower bounded directly in terms of her optimal guessing probability\footnote{The guessing probability $P_{\mathrm{guess}}(K|E,Y)$ is the probability that Eve correctly identifies the raw key value using her classical readout and the publicly announced setting $y$, maximized over all her admissible strategies.}. The binary reverse-Fano inequality\footnote{Note that for non-binary raw key alphabets, an analogous reverse-Fano relation lower bounds $H(K|E,Y)$ in terms of Eve's guessing probability. However, the expression used here is the specialization to a binary key.} \cite{feder1994relations,rioul2023interplay} gives
\begin{equation}
    H(K|E,Y)
    \geq
    2\left[
        1-P_{\mathrm{guess}}(K|E,Y)
    \right].
    \label{eq:reverse_fano}
\end{equation}
Combining Eqs.~\eqref{eq:rate} and~\eqref{eq:reverse_fano}, we obtain
\begin{equation}
    r
    \geq
    2\left[
        1-P_{\mathrm{guess}}(K|E,Y)
    \right]
    -
    H(K|B,Y).
    \label{eq:key_rate_guess}
\end{equation}
Thus, for fixed observed statistics, obtaining a worst-case lower bound on the key rate reduces to maximizing Eve's guessing probability over all individual attacks compatible with our operational assumptions.\footnote{The same guessing probability defines the conditional min-entropy of Eve's classical readout, $H_{\min}(K|E,Y)=-\log_2P_{\mathrm{guess}}(K|E,Y)$. For a binary raw key this also provides a lower bound on the corresponding Shannon conditional entropy, but it is generally weaker than Eq.~\eqref{eq:reverse_fano}. Smooth min-entropy becomes particularly relevant for composable privacy amplification against general multi-round attacks \cite{Tomamichel2011fbe,tomamichel2012tight} which go beyond the scope of this work.} %

\begin{figure}[h]
    \centering
    \resizebox{1\columnwidth}{!}{%
    \begin{tikzpicture}[>=Latex,
  wire/.style={very thick},
  bundle/.style={very thick,double,double distance=2.4pt},
  gate/.style={very thick,draw=black!85,
               drop shadow={shadow xshift=0.9mm,shadow yshift=-0.9mm,
                            opacity=0.35,fill=black}},
  trusted/.style={gate,fill=trustedblue},
  untrusted/.style={gate,fill=untrustedred},
  lab/.style={font=\LARGE}]

\draw[trusted] (0,0.85) -- (-0.6,0.85)
      .. controls (-1.6,0.85) and (-1.6,-0.85) .. (-0.6,-0.85)
      -- (0,-0.85) -- cycle;
\draw[wire] (-3.1,0) -- (-1.35,0);
\node[lab] at (-2.8,-0.5) {$x$};

\draw[bundle] (0.02,0) -- (1.8,0);

\draw[untrusted,rounded corners=6pt] (1.8,-1.9) rectangle (4.4,1.9);
\node[font=\Huge] at (3.1,0) {$\mathcal{T}$};

\draw[bundle] (4.42, 1.45) -- (7.3, 1.45);
\draw[bundle] (4.42,-1.45) -- (7.3,-1.45);

\begin{scope}[shift={(7.3,1.25)}]
  \draw[trusted] (0,1.1) -- (0.55,1.1)
        .. controls (1.6,1.1) and (1.6,-1.1) .. (0.55,-1.1)
        -- (0,-1.1) -- cycle;
  \draw[wire] (1.32,0) -- (3,0);
\end{scope}
\node[lab] at (10,1.7) {$b$};

\begin{scope}[shift={(7.3,-1.25)}]
  \draw[untrusted] (0,1.1) -- (0.55,1.1)
        .. controls (1.6,1.1) and (1.6,-1.1) .. (0.55,-1.1)
        -- (0,-1.1) -- cycle;
  \draw[wire] (1.32,0) -- (3,0);
\end{scope}
\node[lab] at (10,-1.7) {$e$};

\draw[wire] (4.7, 2.15) -- (7.3, 2.15);
\node[lab] at (4.85, 2.55) {$y$};

\draw[wire] (4.7,-2.15) -- (7.3,-2.15);
\node[lab] at (4.85,-2.55) {$y'$};

\end{tikzpicture}
    }
    \caption{Adversarial prepare-and-measure scenario in which Bob and Eve have independent settings $y$ and $y'$ used in the optimization. The blue devices represent the trusted procedures and the red ones, the untrusted ones.}
    \label{fig:adversarial_model_3}
\end{figure}

Note that not all rounds of the experiment need to be used for key generation. We denote by $\mathcal{Y}_{\rm key}$ the set of Bob's settings that may contribute to the raw key, and by $\mathcal{X}_{\rm key}^y$ the preparations retained when Bob announces $y$, with corresponding probabilities $q(y)$ and $q(x|y)$. Moreover, since in the actual protocol Eve chooses her setting $y'$ after learning $y$, the guessing probability only includes the rounds for which $y'=y$. Hence, 
\begin{equation}
\begin{aligned}
    P_{\mathrm{guess}}(K|E,Y) =\\
    \sum_{\substack{y\in\mathcal{Y}_{\rm key}\\
            x\in\mathcal{X}_{\rm key}^{y}\\
            b}}q(y)q(x|y) p\!\left(b,e=\kappa(x,y)\middle|x,y,y'=y\right)
    \end{aligned}
\end{equation}
Importantly,  The Alice-Bob statistics must be recovered upon marginalizing over Eve,
\begin{equation}
    \sum_e p(b,e|x,y,y')=p_{\mathrm{obs}}(b|x,y)\quad \forall y'.
    \label{eq:AB_marginal}
\end{equation} 
Given our assumptions, the operational equivalences impose linear constraints directly on the extended distribution $p(b,e|x,y,y')$. For every preparation equivalence of the form of Eq.~\eqref{eq:prep_opeq}, indistinguishability must persist for any test performed by Eve and Bob (recall Eq.~\eqref{eq:opeq_preservation}), yielding
\begin{equation}
    \sum_x \alpha_x p(b,e|x,y,y')=\sum_x \alpha'_x p(b,e|x,y,y')\qquad\forall\,b,e,y,y'.
    \label{eq:eve_prep_opeq}
\end{equation}
Similarly, for every measurement operational equivalence of the form of
Eq.~\eqref{eq:meas_opeq}, we require
\begin{equation}
\sum_{b,y}\beta_{by}p(b,e|x,y,y')=\sum_{b,y}\beta'_{by}p(b,e|x,y,y')\qquad\forall\,x,e,y'.
    \label{eq:eve_meas_opeq}
\end{equation}

Finally, the causal constraints from the model are the no-signaling conditions from $y$ to $e$,
\begin{equation}
     \sum_b p(b,e|x,y,y')=\sum_b p(b,e|x,\widetilde y,y'),\qquad \forall\,e,x,y,\widetilde y,y',
    \label{eq:lp_ns_bob_to_eve}
\end{equation}
and from $y'$ to $b$ which is already imposed by the recovery of the observed marginals, Eq.~\eqref{eq:AB_marginal}.

Therefore, the optimization of the adversarial attack is

\begin{align}
    \underset{p}{\mathrm{max}}
    \quad &  P_{\mathrm{guess}}(K|E,Y)\quad \mathrm{such} \; \mathrm{that}
    \nonumber\\
    & p(b,e|x,y,y')\geq0 ,
    \nonumber\\
    & \text{Eq.~(\ref{eq:AB_marginal})},
    \label{eq:adversarial_LP}\\
    & \text{preparation OPEQs,
      Eq.~\eqref{eq:eve_prep_opeq}},
    \nonumber\\
    & \text{measurement OPEQs,
      Eq.~\eqref{eq:eve_meas_opeq}},
    \nonumber\\
    & \text{no-signaling from $y$ to $e$,
      Eq.~\eqref{eq:lp_ns_bob_to_eve}}.
    \nonumber
\end{align}

Although Eq.~\eqref{eq:adversarial_LP} gives the complete optimization, it need not be solved as a single large linear program. For different values of $y'$, the constraints above separate into independent blocks, while the objective contains only the diagonal terms $y'=y$. Consequently, the optimum can be obtained more efficiently by solving one fixed-readout linear program for each $y\in\mathcal{Y}_{\rm key}$ and combining the resulting guessing probabilities with the weights $q(y)$. We prove this decomposition in Appendix~\ref{app:fixed_setting}.

\subsection{Contextuality as a necessary condition}
\label{sec:theorem}
A first general consequence of the optimization in Eq.~\eqref{eq:adversarial_LP} is that a positive key rate requires the observed statistics to be contextual. That is, whenever a noncontextual ontological model exists, it can be turned into an attack under which Alice and Bob cannot distill any secret key. %

\begin{theorem}
Consider a prepare-and-measure experiment specified by the probabilities $p_{\rm obs}(b|x,y)$ together with a set of preparation and measurement OPEQs. If this experiment admits a generalized noncontextual ontological model, then the asymptotic secret key rate vanishes under our adversarial model.
\label{theorem}
\end{theorem}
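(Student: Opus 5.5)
The strategy is to turn the noncontextual ontological model into an explicit feasible point of the linear program in Eq.~\eqref{eq:adversarial_LP}: an attack in which Eve learns the ontic state $\lambda$. We then show that under this attack Eve's information about the raw key is at least as good as Bob's, so the bound in Eq.~\eqref{eq:rate} is non-positive.

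\textbf{Construction of the attack.} Let $\{\mu(\lambda|P_x),\xi(b|M_y,\lambda)\}$ be a noncontextual model reproducing $p_{\rm obs}$ and satisfying Eqs.~\eqref{eq:prep_noncontextuality} and~\eqref{eq:meas_noncontextuality}. If $\Lambda$ is continuous, I would first replace it by a finite one: the feasible response functions form a polytope, so its vertices suffice, and I would treat this as routine. Eve reads out $\lambda$ and forwards the system unchanged. Her final readout for setting $y'$ is a deterministic or post-processed guess, giving
\begin{equation}
p(b,e|x,y,y')=\sum_\lambda \mu(\lambda|P_x)\,\xi(b|M_y,\lambda)\,\xi(e|M_{y'},\lambda),
\end{equation}
where Eve's response $\xi(e|M_{y'},\lambda)$ is chosen as a copy of Bob's response function. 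One could instead choose the guess that is optimal given $(\lambda,y')$; to keep the linear constraints manifest, I would take Eve's response to be a copy of Bob's.

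\textbf{Feasibility.} I would check the constraints one by one.
\begin{itemize}
\item Positivity is immediate.
\item The marginal condition, Eq.~\eqref{eq:AB_marginal}, follows from normalization of $\xi(e|M_{y'},\lambda)$ and Eq.~\eqref{eq:ontological_model}.
\item No-signaling from $y$ to $e$, Eq.~\eqref{eq:lp_ns_bob_to_eve}, follows from normalization of Bob's response function.
\item The preparation OPEQs, Eq.~\eqref{eq:eve_prep_opeq}, hold because $\sum_x(\alpha_x-\alpha'_x)\mu(\lambda|P_x)=0$ for every $\lambda$.
\item The measurement OPEQs, Eq.~\eqref{eq:eve_meas_opeq}, hold because Eq.~\eqref{eq:meas_noncontextuality} holds pointwise in $\lambda$ and the Eve factor does not depend on $y$.
\end{itemize}
This is exactly where noncontextuality is used: the OPEQs hold conditioned on $\lambda$, so Eve's side information does not break them.

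\textbf{Zero key rate.} Conditioned on $\lambda$ (and on $y$), Bob's outcome $b$ is generated from $\lambda$ by local noise independent of $K$. Hence $K\to\lambda\to B$ is a Markov chain given $Y$. Eve's readout is an independent copy of Bob's, so $(K,E,Y)$ has the same joint distribution as $(K,B,Y)$. It follows that $H(K|E,Y)=H(K|B,Y)$ and the Csisz\'ar--K\"orner bound in Eq.~\eqref{eq:rate} is zero.

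\textbf{Main obstacle.} The difficulty is that the paper's rate is stated through Eq.~\eqref{eq:key_rate_guess} and the LP-based guessing probability, whereas the actual claim is that no secret key can be distilled. Two steps are needed to close this.
\begin{itemize}
\item For the lower bound used in the paper, I would show directly that with Eve choosing $e$ optimally from $(\lambda,y)$, one has $P_{\mathrm{guess}}\geq P(B=K)$. Then $2(1-P_{\mathrm{guess}})\leq 2P_{\rm err}\leq H(K|B,Y)$ by Fano-type bounds for binary variables, so the certified rate is $\leq 0$. Eve's strategy here is still a valid attack: $e$ can be any function of $\lambda$ and $y'$, and all constraints hold for the same reasons as above.
\item For the actual secret key rate, since Eve holds $\lambda$, which screens off $B$ from $K$, she can simulate Bob's data. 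No protocol can then extract secret key, because $I(K{:}B|\ldots)\le I(K{:}\lambda|\ldots)$ for all one-way post-processing.
\end{itemize}
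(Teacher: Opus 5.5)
The construction and the constraint checks are correct and match the paper's proof (Appendix~\ref{app:ncontextual_zero_key}). The noncontextuality conditions hold pointwise in $\lambda$, and that is exactly what makes the extension admissible. Your copy-of-Bob readout is a feasible point, and it does not even need $\Lambda$ finite.

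The gap is in the conclusion. Showing $H(K|E,Y)=H(K|B,Y)$ only makes the Csisz\'ar--K\"orner \emph{lower} bound of Eq.~\eqref{eq:rate} vanish. It does not show that no key can be distilled, and for the copy attack that stronger claim is false. Take two preparations and one binary measurement with $\mu(\lambda|P_x)=\delta_{\lambda x}$, and a response that flips $\lambda$ with probability $\epsilon\in(0,1/2)$. This model is trivially noncontextual. Yet if Eve holds only an independent copy $E$ of Bob's outcome, reverse reconciliation (Bob as sender) achieves $I(B;K)-I(B;E)=h(2\epsilon(1-\epsilon))-h(\epsilon)>0$, where $h$ is the binary entropy. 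Eve must therefore be given $\lambda$ itself, i.e.\ the paper's extension $p(b,e{=}\lambda|x,y)=\mu(\lambda|P_x)\,\xi(b|M_y,\lambda)$. Its admissibility follows from the same checks you wrote, and your vertex reduction to a finite $\Lambda$ is what keeps Eve's alphabet finite.

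Even with Eve holding $\lambda$, your final bullet argues only via data processing for one-way post-processing. A claim that \emph{no protocol} extracts key must also cover two-way public discussion, such as advantage distillation, where that argument does not apply. The missing ingredient is Maurer's bound $S(X;B\,\|\,\Lambda,Y)\le I(X;B|\Lambda,Y)$. The factorization $q(x)\,q(y)\,\mu(\lambda|P_x)\,\xi(b|M_y,\lambda)$ gives $I(X;B|\Lambda,Y)=0$. Bounding with $X$ rather than $K$ also covers Alice using her whole preparation record. This is precisely how the paper closes the proof.

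Separately, your chain $2(1-P_{\rm guess})\le 2P_{\rm err}\le H(K|B,Y)$ is wrong in general. Take $B=1-K$: then $P_{\rm err}=1$ but $H(K|B,Y)=0$. Reverse Fano bounds $H(K|B,Y)$ from below by $2[1-P_{\rm guess}(K|B,Y)]$, using Bob's Bayes-optimal guess rather than his raw error rate. The repair is to let Eve output the Bayes-optimal guess of $K$ from $(\lambda,y)$. Then $P_{\rm guess}(K|\Lambda,Y)\ge P_{\rm guess}(K|B,Y)$ follows from the Markov chain $K\to\Lambda\to B$ given $Y$. In any case, that bullet only shows the certified bound is nonpositive, not that the rate vanishes.
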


The proof is given in Appendix~\ref{app:ncontextual_zero_key}. Note that, as we will see in the following section, the converse does not hold, that is, contextuality is necessary but not sufficient for key generation.

\section{Example}
\label{sec:results}
The adversarial framework introduced above applies to arbitrary finite prepare-and-measure experiments once the preparations, measurements and the operational equivalences are specified. Given such specification, the linear program above provides the value of $P_{\rm guess}$, which in turn can be used to determine whether Alice and Bob can ever extract a secure key. We have implemented the corresponding linear programs in an open-source numerical package \cite{ContextualityQKDGitHub}, which allows the adversarial guessing probability and the resulting key-rate bound to be evaluated directly from this operational description.

Rather than presenting an exhaustive catalog of these examples here, we focus on an insightful one and refer to Appendix~\ref{app:keyrun_tables} for others that we have explored. The latter includes a variety of prepare-and-measure scenarios such as the prepare-and-measure CHSH, the hexagon scenario~\cite{mazurek2016experimental} and several Kochen-Specker constructions~\cite{cabello1996bell,peres1991two}. Moreover, exploring parity-oblivious random access codes (PORACs)~\cite{spekkens2009preparation} demonstrates that the adversarial model considered here is strictly stronger than restricting the eavesdropper to quantum mechanics. This is because in Ref.~\cite{chaturvedi2021characterising}, they obtain a positive key rate against a quantum eavesdropper, while we obtain no key rate according to our model (see Appendix~\ref{app:PORAC} for more details).

The example we consider here shows that operational equivalences beyond those that mimic the no-signaling constraints can lead directly to stronger cryptographic bounds. We consider the prepare-and-measure version of the protocol of Ref.~\cite{acin2006efficient}, in which the CHSH test is supplemented by an additional setting of Alice's aligned with one of Bob's, used for key generation. More concretely, Bob performs two binary measurements $M_0$ and $M_1$, with effects $M^b_y$ for outcome $b\in\{0,1\}$. Alice has six preparations, that is, the four preparations, $P_0,P_1,P_2,P_3$, used in the CHSH test, and two further preparations $P_4$ and $P_5$ aligned with the two effects of Bob's measurement $M_0$. The preparations and effects are depicted in Fig.~\ref{fig:CHSH_plus}. In the qubit realization, $P_0=\ket{0}$, $P_1=\ket{1}$, $P_2=\ket{+}$ and $P_3=\ket{-}$, Bob's measurements $M_0$ and $M_1$ are along the directions $(Z+X)/\sqrt{2}$ and $(Z-X)/\sqrt{2}$, respectively, and $P_4$ and $P_5$ are the two antipodal states along $(Z+X)/\sqrt{2}$.

\begin{figure}[h]
    \centering
    \resizebox{1\columnwidth}{!}{%
    \begin{tikzpicture}[
  >={Latex[length=1.4mm]},
  st/.style={circle,fill=blue!60!black,inner sep=1.1pt},
  ef/.style={circle,fill=red!65!black,inner sep=1.1pt},
  ctr/.style={circle,fill=black,inner sep=0.9pt},
  ax/.style={gray!55,very thin,<->},
  diam/.style={dashed,gray!65,semithick},
  lbl/.style={font=\scriptsize},
  ttl/.style={font=\scriptsize\bfseries},
]
\def\R{1.05}   %
\def\D{3.55}   %

\begin{scope}
  \draw[ax] (-\R-0.3,0) -- (\R+0.3,0)
    node[above=2pt,lbl,black,inner sep=1pt]{$X$};
  \draw[ax] (0,-\R-0.3) -- (0,\R+0.3)
    node[right=1pt,lbl,black,inner sep=1pt]{$Z$};
  \draw[semithick] (0,0) circle (\R);
  \draw[diam] (0,\R) -- (0,-\R);
  \draw[diam] (-\R,0) -- (\R,0);
  \draw[diam] (45:\R) -- (225:\R);
  \node[st] (s0) at (0,\R) {};    \node[st] (s1) at (0,-\R) {};
  \node[st] (s2) at (\R,0) {};    \node[st] (s3) at (-\R,0) {};
  \node[st] (s4) at (45:\R) {};   \node[st] (s5) at (225:\R) {};
  \node[ctr] at (0,0) {};
  \node[lbl,above left=1pt]  at (s0) {$P_0$};
  \node[lbl,below left=1pt]  at (s1) {$P_1$};
  \node[lbl,below right=1pt] at (s2) {$P_2$};
  \node[lbl,below left=1pt]  at (s3) {$P_3$};
  \node[lbl,above right=0pt] at (s4) {$P_4$};
  \node[lbl,below left=0pt]  at (s5) {$P_5$};
  \node[ttl] at (0,\R+0.95) {Preparations};
\end{scope}

\begin{scope}[xshift=\D cm]
  \draw[ax] (-\R-0.3,0) -- (\R+0.3,0)
    node[above=2pt,lbl,black,inner sep=1pt]{$X$};
  \draw[ax] (0,-\R-0.3) -- (0,\R+0.3)
    node[right=1pt,lbl,black,inner sep=1pt]{$Z$};
  \draw[semithick] (0,0) circle (\R);
  \draw[diam] (45:\R) -- (225:\R);
  \draw[diam] (135:\R) -- (315:\R);
  \node[ef] (e00) at (45:\R)  {};  \node[ef] (e11) at (135:\R) {};
  \node[ef] (e10) at (225:\R) {};  \node[ef] (e01) at (315:\R) {};
  \node[ctr] at (0,0) {};
  \node[lbl,above right=0pt] at (e00) {$M^0_0$};
  \node[lbl,above left=0pt]  at (e11) {$M^1_1$};
  \node[lbl,below left=0pt]  at (e10) {$M^1_0$};
  \node[lbl,below right=0pt] at (e01) {$M^0_1$};
  \node[ttl] at (0,\R+0.95) {Effects};
\end{scope}
\end{tikzpicture}
    }
    \caption{Preparations and effects of the aligned-key protocol in the qubit realization, drawn in the $X$--$Z$ plane: $P_0=\ket{0}$, $P_1=\ket{1}$, $P_2=\ket{+}$, $P_3=\ket{-}$, and $P_4$, $P_5$ the two antipodal states along $(Z+X)/\sqrt{2}$, aligned with the effects $M^0_0$ and $M^1_0$ of Bob's key-generating measurement $M_0$. Pairs of antipodal points, joined by dashed diameters, correspond to bases.}
    \label{fig:CHSH_plus}
\end{figure}

The six preparations satisfy
\begin{equation}
    \frac{1}{2}(P_0+P_1) \simeq \frac{1}{2}(P_2+P_3) \simeq \frac{1}{2}(P_4+P_5),
    \label{eq:aligned_chsh_ns_prep}
\end{equation}
and Bob's measurements satisfy
\begin{equation}
    \frac{1}{2}\left(M^0_0+M^1_0\right) \simeq \frac{1}{2}\left(M^0_1+M^1_1\right).
    \label{eq:chsh_pm_opeq2}
\end{equation}
These two sets of operational equivalences play exactly the role of the no-signaling constraints from Alice to Bob and from Bob to Alice, respectively.

The six preparations, however, satisfy an additional operational equivalence that has no counterpart among the no-signaling constraints,
\begin{equation}
    \frac{1}{2}(P_0+P_2) \simeq \frac{2+\sqrt{2}}{4}P_4 + \frac{2-\sqrt{2}}{4}P_5 .
    \label{eq:aligned_chsh_extra}
\end{equation}
For the qubit realization this relation follows directly from the geometry of the preparations in the $X$--$Z$ plane, but in our analysis it is used only as an experimentally established OPEQ.

As mentioned, only the additional preparations are used to generate the key. Thus, we take $\mathcal{X}_{\mathrm{key}}=\{4,5\}$ and $\mathcal{Y}_{\mathrm{key}}=\{0\}$, with raw key being $\kappa(x,0)=x \: mod(4)$. Hence, in the ideal realization, Bob recovers the key bit perfectly from his outcome. The remaining preparations and measurement setting are used only to characterize the correlations that constrain Eve. As in the original Bell protocol~\cite{acin2006efficient}, the key-generating choices can be made with probabilities arbitrarily close to one in the asymptotic limit, while the remaining settings are sampled sufficiently often to estimate the required conditional probabilities. Consequently, the asymptotic key rate approaches the rate obtained from the key-generating rounds. 

This scenario allows us to isolate the cryptographic contribution of the additional OPEQ in Eq.~\eqref{eq:aligned_chsh_extra}. We first perform the adversarial optimization imposing the complete set of experimentally available OPEQs, Eqs.~\eqref{eq:aligned_chsh_ns_prep}, \eqref{eq:chsh_pm_opeq2} and \eqref{eq:aligned_chsh_extra}, and we obtain $r_{\mathrm{all\;OPEQs}} \geq 0.5857$. We then repeat the optimization discarding Eq.~\eqref{eq:aligned_chsh_extra}, that is, imposing only the no-signaling-type OPEQs, Eqs.~\eqref{eq:aligned_chsh_ns_prep} and \eqref{eq:chsh_pm_opeq2}. Under this restriction, the same observed Alice--Bob statistics instead yield $r_{\mathrm{NS}} \geq 0.4142$, reproducing the no-signaling lower bound on the key-rate of Ref.~\cite{acin2006efficient}. This agreement is expected because when only Eqs.~\eqref{eq:aligned_chsh_ns_prep} and \eqref{eq:chsh_pm_opeq2} are imposed, the constraints entering our prepare-and-measure optimization are precisely those corresponding to no-signaling from Alice to Bob and from Bob to Alice in the Bell scenario.

Since Alice and Bob are perfectly correlated on the key-generating settings, the gap between the two lower bounds on the rate arises entirely from the stronger restriction placed on Eve's admissible extensions. The additional operational equivalence reduces her optimal guessing probability even though the observed correlations are unchanged. This provides a direct example in which experimentally established operational equivalences in a prepare-and-measure experiment certify a larger post-quantum lower bound on the key rate than that obtained from the usual Bell-based DIQKD constraints.

\section{Experimental considerations}
\label{sec:experimental}
The protocols considered above can in principle be implemented using essentially the same optical components as conventional polarization-encoded prepare-and-measure QKD~\cite{gisin2002quantum,scarani2009security}. The main experimental distinction from conventional prepare-and-measure QKD lies instead in the role of device characterization. In standard QKD, imperfections of the source and measurement apparatus are typically incorporated into a trusted quantum model of the devices. Here, the security analysis relies directly on experimentally established operational equivalences, and thus, on the condition of tomographic completeness for that characterization. Consequently, each preparation or measurement need not coincide exactly with its nominal quantum description. Rather, the relevant mixtures must be operationally indistinguishable to the precision required by the security analysis.

Experimentally, inferring the operational equivalences is itself a nontrivial characterization problem. Earlier experimental tests of generalized noncontextuality already highlighted both the difficulty of realizing exact operational equivalences and the role of tomographic completeness in establishing them~\cite{mazurek2016experimental}. A natural approach here is to reconstruct the preparation and measurement spaces self-consistently from the observed statistics, without assuming a prior quantum characterization of either side. Such self-consistent GPT tomography has been demonstrated experimentally and can be used both to test hypotheses for the operational dimension and to reconstruct the corresponding state and effect spaces~\cite{MazurekEtAl2021}. Closely related self-consistent reconstruction ideas also underlie the gauge-independent identification of operational identities used in Ref.~\cite{zhang2026entanglement}.

For example, arranging the observed probabilities of a prepare-and-measure experiment into the matrix
\begin{align}
    \mathbf P_{(x),(b,y)}:=p(b|xy),
\end{align}
the preparation and measurement OPEQs correspond, under tomographic completeness, to the appropriate left and right null spaces of $\mathbf P$. This characterization is invariant under the invertible linear gauge transformations inherent in self-consistent tomography~\cite{MazurekEtAl2021,zhang2026entanglement}. In the ideal infinite-data limit, these relations appear as exact null directions. With finite data, however, the empirical matrix $\widehat{\mathbf P}$ will generically have no exact null directions. One must therefore determine the effective operational dimension and distinguish genuine linear dependencies from small singular values generated by statistical fluctuations. The reliability of an inferred OPEQ thus depends not only on the number of experimental samples, but also on the conditioning of the tomographic reconstruction and on the spectral gap separating the approximately null subspace from the remaining
operational directions.

There is also a distinction between inferring an operational identity in the reconstructed model and realizing it exactly with laboratory procedures. Exact operational equivalence cannot generally be expected at finite experimental precision. In the photonic experiment of Ref.~\cite{mazurek2016experimental}, this difficulty was addressed using secondary procedures, constructed as convex mixtures of experimentally implemented procedures so as to satisfy the desired operational equivalences exactly. Alternatively, one can formulate tests that are robust to imperfect operational equivalences~\cite{Pusey2018}. Similar techniques could provide a natural starting point for a robust experimental implementation of the present protocol.

Since the operational equivalences are characterized locally, Alice's by testing her preparations against a tomographically complete set of measurements inside her laboratory and Bob's likewise inside his, the settings used for the QKD protocol need not themselves form a tomographically complete set. Auxiliary local preparations and measurements may be introduced solely for characterization. For a polarization implementation, for example, additional measurements outside the $X$--$Z$ plane can be used to complete the tomographic set  without entering either the key-generation or contextuality-test rounds.  Experimentally, it would also be preferable to include such characterization rounds throughout the QKD run rather than relying only on a pre-calibration, thereby testing the stability of the operational equivalences during data acquisition. This is particularly relevant here because slow device drift may change not only the observed error rate, as in conventional QKD, but also the operational relations that determine Eve's admissible set. As in efficient prepare-and-measure QKD~\cite{LoChauArdehali2005}, the key-generating settings can be strongly favored while the remaining settings are sampled only often enough to estimate the required operational structure.

The potential advantage of the present approach should therefore be understood primarily as a change in the assumptions underlying the security analysis, although the richer set of experimentally accessible OPEQs can also strengthen the resulting key-rate bound in some scenarios. In the present work, the relevant operational equivalences in the QKD protocol are assumed to be ideally characterized. A realistic implementation will of course verify these operational equivalences only approximately. As in many realistic QKD protocols~\cite{Xu2020}, incorporating finite statistics, device drift, loss, and other imperfections into robust OPEQ constraints is an important next step, which we leave to future experimental work.

\section{Discussion}
\label{sec:discussion}

Our results show that generalized contextuality can provide an operational basis for prepare-and-measure QKD without assuming that the communicated system, or Eve's intervention on it, is governed by quantum mechanics. The central ingredient is the set of operational equivalences established within the trusted laboratories of Alice and Bob. These relations directly constrain the admissible extensions of the observed statistics and hence Eve’s guessing probability. In scenarios that are the prepare-and-measure counterpart of Bell experiments, the OPEQs that mirror no-signaling constraints recover the corresponding post-quantum lower bounds on the key rates. Prepare-and-measure experiments may, however, contain operational equivalences with no no-signaling counterpart, and exploiting them can strictly increase the certified rate for the same observed statistics.

At the same time, the presence of contextuality does not by itself guarantee secrecy. We showed that no positive key rate can be extracted from statistics admitting a noncontextual model, and the $(3,2)$-PORAC example shows that the converse fails: correlations that are contextual, and that yield a positive key rate against a quantum adversary, may nevertheless admit a post-quantum extension compatible with all their OPEQs for which no positive rate can be certified. Contextuality is therefore necessary but not sufficient for secret-key generation in our adversarial model, much as Bell nonlocality is known to be insufficient for broad classes of device-independent protocols~\cite{farkas2021bell}.

Our analysis is neither device independent nor semi-device independent in the usual sense: no Hilbert-space dimension, quantum description, or model of the internal workings of the devices is assumed. Its central assumption is instead that Alice's and Bob's local tests are tomographically complete, so that the operational equivalences inferred from their statistics hold for the procedures at the laboratory boundary. We regard this as no less reasonable than the assumptions underlying device-independent protocols~\cite{spekkens2026against}. Device independence does not eliminate trust, it relocates it. Its security proof requires no-signaling to hold for every intervention available to the adversary, and what guarantees this is not the protocol but the assumption that the laboratories are shielded, so that no physical route other than the one relativity forbids connects Alice's setting to Bob's device. Tomographic completeness plays the same role here. The operational equivalences are assumed to hold for every procedure available to Eve, and what guarantees this is the assumption that the system leaving Alice's laboratory carries no operational degrees of freedom beyond those that have been characterized, which is a no-hidden-channel assumption on the designated carrier of exactly the kind that shielding imposes on the laboratory walls. In both cases the assumption is physical and defeasible, and fails through unmodelled structure in the honest parties' hardware. We therefore regard the two as different points in the same space of assumptions rather than as an ordered pair.

An important limitation of the present analysis is that the security bounds are derived against individual attacks. Extending the framework to general attacks requires controlling Eve's information across many uses of the protocol, where she may correlate her interventions between rounds, retain arbitrary post-quantum side information, and postpone a collective readout until after the public communication. In this regime, the single-round Csisz\'ar--K\"orner and reverse-Fano argument used above is not by itself a composable security proof. In quantum cryptography, such extensions are typically formulated in terms of smooth conditional min-entropy and privacy amplification, while device-independent proofs require additional structure to relate the observed single-round correlations to the multi-round entropy~\cite{Tomamichel2011fbe,masanes2011secure}. For genuinely post-quantum adversaries the issue is more subtle: privacy amplification can fail for sufficiently general no-signaling extensions~\cite{hanggi2013impossibility}, whereas composable security can be recovered under suitable additional causality constraints~\cite{masanes2009universally}. Determining the corresponding multi-round conditions under which the OPEQ-constrained adversarial model introduced here admits privacy amplification, and hence composable security against general attacks, is an important direction for future
work.

\begin{acknowledgments}
We thank Robert Spekkens, Matilde Baroni and Dominik Leichtle for fruitful discussions. More specially, we are grateful to Daniel Sun, Teresa Del Aguila Ferrandis and Rennan Gomes de Albuquerque for all the discussions and shared ideas during the winter school 2026 of the PSI Master program where this project started. We acknowledge the use of generative AI in the writing process of the manuscript as well as in the coding part of the project. Research at Perimeter Institute is supported in part by the Government of Canada through the Department of Innovation, Science and Economic Development and by the Province of Ontario through the Ministry of Colleges and Universities.
\end{acknowledgments}

\newpage

\bibliographystyle{apsrev4-2-wolfe}
\setlength{\bibsep}{.4\baselineskip plus .1\baselineskip minus .1\baselineskip}
\nocite{apsrev42Control}
\bibliography{references}

\begin{thebibliography}{54}%
\makeatletter
\providecommand \@ifxundefined [1]{%
 \@ifx{#1\undefined}
}%
\providecommand \@ifnum [1]{%
 \ifnum #1\expandafter \@firstoftwo
 \else \expandafter \@secondoftwo
 \fi
}%
\providecommand \@ifx [1]{%
 \ifx #1\expandafter \@firstoftwo
 \else \expandafter \@secondoftwo
 \fi
}%
\providecommand \natexlab [1]{#1}%
\providecommand \enquote  [1]{``#1''}%
\providecommand \bibnamefont  [1]{#1}%
\providecommand \bibfnamefont [1]{#1}%
\providecommand \citenamefont [1]{#1}%
\providecommand \href@noop [0]{\@secondoftwo}%
\providecommand \href [0]{\begingroup \@sanitize@url \@href}%
\providecommand \@href[1]{\@@startlink{#1}\@@href}%
\providecommand \@@href[1]{\endgroup#1\@@endlink}%
\providecommand \@sanitize@url [0]{\catcode `\\12\catcode `\$12\catcode `\&12\catcode `\#12\catcode `\^12\catcode `\_12\catcode `\%12\relax}%
\providecommand \@@startlink[1]{}%
\providecommand \@@endlink[0]{}%
\providecommand \url  [0]{\begingroup\@sanitize@url \@url }%
\providecommand \@url [1]{\endgroup\@href {#1}{\urlprefix }}%
\providecommand \urlprefix  [0]{URL }%
\providecommand \Eprint [0]{\href }%
\providecommand \doibase [0]{https://doi.org/}%
\providecommand \selectlanguage [0]{\@gobble}%
\providecommand \bibinfo  [0]{\@secondoftwo}%
\providecommand \bibfield  [0]{\@secondoftwo}%
\providecommand \translation [1]{[#1]}%
\providecommand \BibitemOpen [0]{}%
\providecommand \bibitemStop [0]{}%
\providecommand \bibitemNoStop [0]{.\EOS\space}%
\providecommand \EOS [0]{\spacefactor3000\relax}%
\providecommand \BibitemShut  [1]{\csname bibitem#1\endcsname}%
\let\auto@bib@innerbib\@empty
\bibitem [{\citenamefont {Gisin}\ \emph {et~al.}(2002)\citenamefont {Gisin}, \citenamefont {Ribordy}, \citenamefont {Tittel},\ and\ \citenamefont {Zbinden}}]{gisin2002quantum}%
  \BibitemOpen
  \bibfield  {author} {\bibinfo {author} {\bibfnamefont {N.}~\bibnamefont {Gisin}}, \bibinfo {author} {\bibfnamefont {G.}~\bibnamefont {Ribordy}}, \bibinfo {author} {\bibfnamefont {W.}~\bibnamefont {Tittel}},\ and\ \bibinfo {author} {\bibfnamefont {H.}~\bibnamefont {Zbinden}},\ }\bibfield  {title} {\enquote {\bibinfo {title} {{Quantum cryptography}},}\ }\href {https://doi.org/10.1103/RevModPhys.74.145} {\bibfield  {journal} {\bibinfo  {journal} {Rev. Mod. Phys.}\ }\textbf {\bibinfo {volume} {74}},\ \bibinfo {pages} {145} (\bibinfo {year} {2002})}\BibitemShut {NoStop}%
\bibitem [{\citenamefont {Scarani}\ \emph {et~al.}(2009)\citenamefont {Scarani}, \citenamefont {Bechmann-Pasquinucci}, \citenamefont {Cerf}, \citenamefont {Du{\v{s}}ek}, \citenamefont {L{\"u}tkenhaus},\ and\ \citenamefont {Peev}}]{scarani2009security}%
  \BibitemOpen
  \bibfield  {author} {\bibinfo {author} {\bibfnamefont {V.}~\bibnamefont {Scarani}}, \bibinfo {author} {\bibfnamefont {H.}~\bibnamefont {Bechmann-Pasquinucci}}, \bibinfo {author} {\bibfnamefont {N.~J.}\ \bibnamefont {Cerf}}, \bibinfo {author} {\bibfnamefont {M.}~\bibnamefont {Du{\v{s}}ek}}, \bibinfo {author} {\bibfnamefont {N.}~\bibnamefont {L{\"u}tkenhaus}},\ and\ \bibinfo {author} {\bibfnamefont {M.}~\bibnamefont {Peev}},\ }\bibfield  {title} {\enquote {\bibinfo {title} {{The security of practical quantum key distribution}},}\ }\href {https://doi.org/10.1103/RevModPhys.81.1301} {\bibfield  {journal} {\bibinfo  {journal} {Rev. Mod. Phys.}\ }\textbf {\bibinfo {volume} {81}},\ \bibinfo {pages} {1301} (\bibinfo {year} {2009})}\BibitemShut {NoStop}%
\bibitem [{\citenamefont {Acin}\ \emph {et~al.}(2006{\natexlab{a}})\citenamefont {Acin}, \citenamefont {Gisin},\ and\ \citenamefont {Masanes}}]{acin2006bell}%
  \BibitemOpen
  \bibfield  {author} {\bibinfo {author} {\bibfnamefont {A.}~\bibnamefont {Acin}}, \bibinfo {author} {\bibfnamefont {N.}~\bibnamefont {Gisin}},\ and\ \bibinfo {author} {\bibfnamefont {L.}~\bibnamefont {Masanes}},\ }\bibfield  {title} {\enquote {\bibinfo {title} {{From Bell’s theorem to secure quantum key distribution}},}\ }\href {https://doi.org/10.1103/PhysRevLett.97.120405} {\bibfield  {journal} {\bibinfo  {journal} {Phys. Rev. Lett.}\ }\textbf {\bibinfo {volume} {97}},\ \bibinfo {pages} {120405} (\bibinfo {year} {2006}{\natexlab{a}})}\BibitemShut {NoStop}%
\bibitem [{\citenamefont {Ekert}(1991)}]{ekert1991quantum}%
  \BibitemOpen
  \bibfield  {author} {\bibinfo {author} {\bibfnamefont {A.~K.}\ \bibnamefont {Ekert}},\ }\bibfield  {title} {\enquote {\bibinfo {title} {{Quantum cryptography based on Bell’s theorem}},}\ }\href {https://doi.org/10.1103/PhysRevLett.67.661} {\bibfield  {journal} {\bibinfo  {journal} {Phys. Rev. Lett.}\ }\textbf {\bibinfo {volume} {67}},\ \bibinfo {pages} {661} (\bibinfo {year} {1991})}\BibitemShut {NoStop}%
\bibitem [{\citenamefont {Masanes}\ \emph {et~al.}(2011)\citenamefont {Masanes}, \citenamefont {Pironio},\ and\ \citenamefont {Ac{\'\i}n}}]{masanes2011secure}%
  \BibitemOpen
  \bibfield  {author} {\bibinfo {author} {\bibfnamefont {L.}~\bibnamefont {Masanes}}, \bibinfo {author} {\bibfnamefont {S.}~\bibnamefont {Pironio}},\ and\ \bibinfo {author} {\bibfnamefont {A.}~\bibnamefont {Ac{\'\i}n}},\ }\bibfield  {title} {\enquote {\bibinfo {title} {{Secure device-independent quantum key distribution with causally independent measurement devices}},}\ }\href {https://www.nature.com/articles/ncomms1244} {\bibfield  {journal} {\bibinfo  {journal} {Nature Comm.}\ }\textbf {\bibinfo {volume} {2}},\ \bibinfo {pages} {238} (\bibinfo {year} {2011})}\BibitemShut {NoStop}%
\bibitem [{\citenamefont {Barrett}\ \emph {et~al.}(2005)\citenamefont {Barrett}, \citenamefont {Hardy},\ and\ \citenamefont {Kent}}]{barrett2005no}%
  \BibitemOpen
  \bibfield  {author} {\bibinfo {author} {\bibfnamefont {J.}~\bibnamefont {Barrett}}, \bibinfo {author} {\bibfnamefont {L.}~\bibnamefont {Hardy}},\ and\ \bibinfo {author} {\bibfnamefont {A.}~\bibnamefont {Kent}},\ }\bibfield  {title} {\enquote {\bibinfo {title} {{No signaling and quantum key distribution}},}\ }\href {https://doi.org/10.1103/PhysRevLett.95.010503} {\bibfield  {journal} {\bibinfo  {journal} {Phys. Rev. Lett.}\ }\textbf {\bibinfo {volume} {95}},\ \bibinfo {pages} {010503} (\bibinfo {year} {2005})}\BibitemShut {NoStop}%
\bibitem [{\citenamefont {Scarani}\ \emph {et~al.}(2006)\citenamefont {Scarani}, \citenamefont {Gisin}, \citenamefont {Brunner}, \citenamefont {Masanes}, \citenamefont {Pino},\ and\ \citenamefont {Ac{\'\i}n}}]{scarani2006secrecy}%
  \BibitemOpen
  \bibfield  {author} {\bibinfo {author} {\bibfnamefont {V.}~\bibnamefont {Scarani}}, \bibinfo {author} {\bibfnamefont {N.}~\bibnamefont {Gisin}}, \bibinfo {author} {\bibfnamefont {N.}~\bibnamefont {Brunner}}, \bibinfo {author} {\bibfnamefont {L.}~\bibnamefont {Masanes}}, \bibinfo {author} {\bibfnamefont {S.}~\bibnamefont {Pino}},\ and\ \bibinfo {author} {\bibfnamefont {A.}~\bibnamefont {Ac{\'\i}n}},\ }\bibfield  {title} {\enquote {\bibinfo {title} {{Secrecy extraction from no-signaling correlations}},}\ }\href {https://doi.org/10.1103/PhysRevA.74.042339} {\bibfield  {journal} {\bibinfo  {journal} {Phys. Rev. A—Atomic, Molecular, and Optical Physics}\ }\textbf {\bibinfo {volume} {74}},\ \bibinfo {pages} {042339} (\bibinfo {year} {2006})}\BibitemShut {NoStop}%
\bibitem [{\citenamefont {Acin}\ \emph {et~al.}(2006{\natexlab{b}})\citenamefont {Acin}, \citenamefont {Massar},\ and\ \citenamefont {Pironio}}]{acin2006efficient}%
  \BibitemOpen
  \bibfield  {author} {\bibinfo {author} {\bibfnamefont {A.}~\bibnamefont {Acin}}, \bibinfo {author} {\bibfnamefont {S.}~\bibnamefont {Massar}},\ and\ \bibinfo {author} {\bibfnamefont {S.}~\bibnamefont {Pironio}},\ }\bibfield  {title} {\enquote {\bibinfo {title} {{Efficient quantum key distribution secure against no-signalling eavesdroppers}},}\ }\href {https://iopscience.iop.org/article/10.1088/1367-2630/8/8/126} {\bibfield  {journal} {\bibinfo  {journal} {New J. Phys.}\ }\textbf {\bibinfo {volume} {8}},\ \bibinfo {pages} {126} (\bibinfo {year} {2006}{\natexlab{b}})}\BibitemShut {NoStop}%
\bibitem [{\citenamefont {Farkas}\ \emph {et~al.}(2021)\citenamefont {Farkas}, \citenamefont {Balanz{\'o}-Juand{\'o}}, \citenamefont {{\L}ukanowski}, \citenamefont {Ko{\l}ody{\'n}ski},\ and\ \citenamefont {Ac{\'\i}n}}]{farkas2021bell}%
  \BibitemOpen
  \bibfield  {author} {\bibinfo {author} {\bibfnamefont {M.}~\bibnamefont {Farkas}}, \bibinfo {author} {\bibfnamefont {M.}~\bibnamefont {Balanz{\'o}-Juand{\'o}}}, \bibinfo {author} {\bibfnamefont {K.}~\bibnamefont {{\L}ukanowski}}, \bibinfo {author} {\bibfnamefont {J.}~\bibnamefont {Ko{\l}ody{\'n}ski}},\ and\ \bibinfo {author} {\bibfnamefont {A.}~\bibnamefont {Ac{\'\i}n}},\ }\bibfield  {title} {\enquote {\bibinfo {title} {{Bell nonlocality is not sufficient for the security of standard device-independent quantum key distribution protocols}},}\ }\href {https://doi.org/10.1103/PhysRevLett.127.050503} {\bibfield  {journal} {\bibinfo  {journal} {Phys. Rev. Lett.}\ }\textbf {\bibinfo {volume} {127}},\ \bibinfo {pages} {050503} (\bibinfo {year} {2021})}\BibitemShut {NoStop}%
\bibitem [{\citenamefont {Liu}\ and\ \citenamefont {Ramanathan}(2025)}]{liu2025optimal}%
  \BibitemOpen
  \bibfield  {author} {\bibinfo {author} {\bibfnamefont {Y.}~\bibnamefont {Liu}}\ and\ \bibinfo {author} {\bibfnamefont {R.}~\bibnamefont {Ramanathan}},\ }\bibfield  {title} {\enquote {\bibinfo {title} {{Optimal and feasible contextuality-based randomness generation}},}\ }\href {https://doi.org/10.1103/cn3m-j743} {\bibfield  {journal} {\bibinfo  {journal} {Phys. Rev. Lett.}\ }\textbf {\bibinfo {volume} {135}},\ \bibinfo {pages} {170805} (\bibinfo {year} {2025})}\BibitemShut {NoStop}%
\bibitem [{\citenamefont {Singh}\ \emph {et~al.}(2017)\citenamefont {Singh}, \citenamefont {Bharti},\ and\ \citenamefont {Arvind}}]{singh2017quantum}%
  \BibitemOpen
  \bibfield  {author} {\bibinfo {author} {\bibfnamefont {J.}~\bibnamefont {Singh}}, \bibinfo {author} {\bibfnamefont {K.}~\bibnamefont {Bharti}},\ and\ \bibinfo {author} {\bibnamefont {Arvind}},\ }\bibfield  {title} {\enquote {\bibinfo {title} {{Quantum key distribution protocol based on contextuality monogamy}},}\ }\href {https://doi.org/10.1103/PhysRevA.95.062333} {\bibfield  {journal} {\bibinfo  {journal} {Phys. Rev. A}\ }\textbf {\bibinfo {volume} {95}},\ \bibinfo {pages} {062333} (\bibinfo {year} {2017})}\BibitemShut {NoStop}%
\bibitem [{\citenamefont {Meng}\ \emph {et~al.}(2025)\citenamefont {Meng}, \citenamefont {Saha}, \citenamefont {Mikkelsen}, \citenamefont {Henke}, \citenamefont {Wang}, \citenamefont {Bart}, \citenamefont {Ludwig}, \citenamefont {Lodahl}, \citenamefont {Cabello},\ and\ \citenamefont {Midolo}}]{meng2025contextuality}%
  \BibitemOpen
  \bibfield  {author} {\bibinfo {author} {\bibfnamefont {Y.}~\bibnamefont {Meng}}, \bibinfo {author} {\bibfnamefont {D.}~\bibnamefont {Saha}}, \bibinfo {author} {\bibfnamefont {M.~T.}\ \bibnamefont {Mikkelsen}}, \bibinfo {author} {\bibfnamefont {C.}~\bibnamefont {Henke}}, \bibinfo {author} {\bibfnamefont {Y.}~\bibnamefont {Wang}}, \bibinfo {author} {\bibfnamefont {N.}~\bibnamefont {Bart}}, \bibinfo {author} {\bibfnamefont {A.}~\bibnamefont {Ludwig}}, \bibinfo {author} {\bibfnamefont {P.}~\bibnamefont {Lodahl}}, \bibinfo {author} {\bibfnamefont {A.}~\bibnamefont {Cabello}},\ and\ \bibinfo {author} {\bibfnamefont {L.}~\bibnamefont {Midolo}},\ }\bibfield  {title} {\enquote {\bibinfo {title} {Contextuality-based quantum key distribution with deterministic single-photon sources},}\ }\href {https://arxiv.org/abs/2510.12761} {\bibfield  {journal} {\bibinfo  {journal} {arXiv preprint arXiv:2510.12761}\ } (\bibinfo {year} {2025})}\BibitemShut {NoStop}%
\bibitem [{\citenamefont {Cabello}\ \emph {et~al.}(2011)\citenamefont {Cabello}, \citenamefont {D’Ambrosio}, \citenamefont {Nagali},\ and\ \citenamefont {Sciarrino}}]{cabello2011hybrid}%
  \BibitemOpen
  \bibfield  {author} {\bibinfo {author} {\bibfnamefont {A.}~\bibnamefont {Cabello}}, \bibinfo {author} {\bibfnamefont {V.}~\bibnamefont {D’Ambrosio}}, \bibinfo {author} {\bibfnamefont {E.}~\bibnamefont {Nagali}},\ and\ \bibinfo {author} {\bibfnamefont {F.}~\bibnamefont {Sciarrino}},\ }\bibfield  {title} {\enquote {\bibinfo {title} {Hybrid ququart-encoded quantum cryptography protected by kochen-specker contextuality},}\ }\href {https://doi.org/10.1103/PhysRevA.84.030302} {\bibfield  {journal} {\bibinfo  {journal} {Physical Review A—Atomic, Molecular, and Optical Physics}\ }\textbf {\bibinfo {volume} {84}},\ \bibinfo {pages} {030302} (\bibinfo {year} {2011})}\BibitemShut {NoStop}%
\bibitem [{\citenamefont {Spekkens}(2005)}]{Spekkens2005}%
  \BibitemOpen
  \bibfield  {author} {\bibinfo {author} {\bibfnamefont {R.~W.}\ \bibnamefont {Spekkens}},\ }\bibfield  {title} {\enquote {\bibinfo {title} {{Contextuality for preparations, transformations, and unsharp measurements}},}\ }\href {https://doi.org/10.1103/PhysRevA.71.052108} {\bibfield  {journal} {\bibinfo  {journal} {Phys. Rev. A—Atomic, Molecular, and Optical Physics}\ }\textbf {\bibinfo {volume} {71}},\ \bibinfo {pages} {052108} (\bibinfo {year} {2005})}\BibitemShut {NoStop}%
\bibitem [{\citenamefont {Wright}\ and\ \citenamefont {Farkas}(2023)}]{wright2023invertible}%
  \BibitemOpen
  \bibfield  {author} {\bibinfo {author} {\bibfnamefont {V.~J.}\ \bibnamefont {Wright}}\ and\ \bibinfo {author} {\bibfnamefont {M.}~\bibnamefont {Farkas}},\ }\bibfield  {title} {\enquote {\bibinfo {title} {{Invertible map between Bell nonlocal and contextuality scenarios}},}\ }\href {https://doi.org/10.1103/PhysRevLett.131.220202} {\bibfield  {journal} {\bibinfo  {journal} {Phys. Rev. Lett.}\ }\textbf {\bibinfo {volume} {131}},\ \bibinfo {pages} {220202} (\bibinfo {year} {2023})}\BibitemShut {NoStop}%
\bibitem [{\citenamefont {Baroni}\ \emph {et~al.}(2026)\citenamefont {Baroni}, \citenamefont {Leichtle}, \citenamefont {{\v{S}}upi{\'c}}, \citenamefont {Markham},\ and\ \citenamefont {Quintino}}]{baroni2026composable}%
  \BibitemOpen
  \bibfield  {author} {\bibinfo {author} {\bibfnamefont {M.}~\bibnamefont {Baroni}}, \bibinfo {author} {\bibfnamefont {D.}~\bibnamefont {Leichtle}}, \bibinfo {author} {\bibfnamefont {I.}~\bibnamefont {{\v{S}}upi{\'c}}}, \bibinfo {author} {\bibfnamefont {D.}~\bibnamefont {Markham}},\ and\ \bibinfo {author} {\bibfnamefont {M.~T.}\ \bibnamefont {Quintino}},\ }\bibfield  {title} {\enquote {\bibinfo {title} {{Composable simultaneous purification: when all communication scenarios reduce to spatial correlations}},}\ }\href {https://arxiv.org/abs/2601.05158} {\bibfield  {journal} {\bibinfo  {journal} {arXiv preprint arXiv:2601.05158}\ } (\bibinfo {year} {2026})}\BibitemShut {NoStop}%
\bibitem [{\citenamefont {Tavakoli}\ and\ \citenamefont {Uola}(2020)}]{tavakoli2020measurement}%
  \BibitemOpen
  \bibfield  {author} {\bibinfo {author} {\bibfnamefont {A.}~\bibnamefont {Tavakoli}}\ and\ \bibinfo {author} {\bibfnamefont {R.}~\bibnamefont {Uola}},\ }\bibfield  {title} {\enquote {\bibinfo {title} {{Measurement incompatibility and steering are necessary and sufficient for operational contextuality}},}\ }\href {https://doi.org/10.1103/PhysRevResearch.2.013011} {\bibfield  {journal} {\bibinfo  {journal} {Physical Review Research}\ }\textbf {\bibinfo {volume} {2}},\ \bibinfo {pages} {013011} (\bibinfo {year} {2020})}\BibitemShut {NoStop}%
\bibitem [{\citenamefont {Yadavalli}\ and\ \citenamefont {Kunjwal}(2022)}]{yadavalli2022contextuality}%
  \BibitemOpen
  \bibfield  {author} {\bibinfo {author} {\bibfnamefont {S.~A.}\ \bibnamefont {Yadavalli}}\ and\ \bibinfo {author} {\bibfnamefont {R.}~\bibnamefont {Kunjwal}},\ }\bibfield  {title} {\enquote {\bibinfo {title} {{Contextuality in entanglement-assisted one-shot classical communication}},}\ }\href {https://quantum-journal.org/papers/q-2022-10-13-839/} {\bibfield  {journal} {\bibinfo  {journal} {Quantum}\ }\textbf {\bibinfo {volume} {6}},\ \bibinfo {pages} {839} (\bibinfo {year} {2022})}\BibitemShut {NoStop}%
\bibitem [{\citenamefont {Schmid}\ \emph {et~al.}(2024)\citenamefont {Schmid}, \citenamefont {Selby}, \citenamefont {Pusey},\ and\ \citenamefont {Spekkens}}]{Schmid2024structuretheorem}%
  \BibitemOpen
  \bibfield  {author} {\bibinfo {author} {\bibfnamefont {D.}~\bibnamefont {Schmid}}, \bibinfo {author} {\bibfnamefont {J.~H.}\ \bibnamefont {Selby}}, \bibinfo {author} {\bibfnamefont {M.~F.}\ \bibnamefont {Pusey}},\ and\ \bibinfo {author} {\bibfnamefont {R.~W.}\ \bibnamefont {Spekkens}},\ }\bibfield  {title} {\enquote {\bibinfo {title} {A structure theorem for generalized-noncontextual ontological models},}\ }\href {https://doi.org/10.22331/q-2024-03-14-1283} {\bibfield  {journal} {\bibinfo  {journal} {{Quantum}}\ }\textbf {\bibinfo {volume} {8}},\ \bibinfo {pages} {1283} (\bibinfo {year} {2024})}\BibitemShut {NoStop}%
\bibitem [{\citenamefont {Grabowecky}\ \emph {et~al.}(2022)\citenamefont {Grabowecky}, \citenamefont {Pollack}, \citenamefont {Cameron}, \citenamefont {Spekkens},\ and\ \citenamefont {Resch}}]{Grabowecky2022}%
  \BibitemOpen
  \bibfield  {author} {\bibinfo {author} {\bibfnamefont {M.~J.}\ \bibnamefont {Grabowecky}}, \bibinfo {author} {\bibfnamefont {C.~A.~J.}\ \bibnamefont {Pollack}}, \bibinfo {author} {\bibfnamefont {A.~R.}\ \bibnamefont {Cameron}}, \bibinfo {author} {\bibfnamefont {R.~W.}\ \bibnamefont {Spekkens}},\ and\ \bibinfo {author} {\bibfnamefont {K.~J.}\ \bibnamefont {Resch}},\ }\bibfield  {title} {\enquote {\bibinfo {title} {Experimentally bounding deviations from quantum theory for a photonic three-level system using theory-agnostic tomography},}\ }\href {https://doi.org/10.1103/PhysRevA.105.032204} {\bibfield  {journal} {\bibinfo  {journal} {Phys. Rev. A}\ }\textbf {\bibinfo {volume} {105}},\ \bibinfo {pages} {032204} (\bibinfo {year} {2022})}\BibitemShut {NoStop}%
\bibitem [{\citenamefont {Zhang}\ \emph {et~al.}(2026{\natexlab{a}})\citenamefont {Zhang}, \citenamefont {Schmid}, \citenamefont {Y{\=i}ng},\ and\ \citenamefont {Spekkens}}]{ZhangSchmidYingSpekkens2026}%
  \BibitemOpen
  \bibfield  {author} {\bibinfo {author} {\bibfnamefont {Y.}~\bibnamefont {Zhang}}, \bibinfo {author} {\bibfnamefont {D.}~\bibnamefont {Schmid}}, \bibinfo {author} {\bibfnamefont {Y.}~\bibnamefont {Y{\=i}ng}},\ and\ \bibinfo {author} {\bibfnamefont {R.~W.}\ \bibnamefont {Spekkens}},\ }\bibfield  {title} {\enquote {\bibinfo {title} {Reassessing the boundary between classical and nonclassical for individual quantum processes},}\ }\href {https://doi.org/10.1103/vqfz-wzjg} {\bibfield  {journal} {\bibinfo  {journal} {Phys. Rev. X}\ }\textbf {\bibinfo {volume} {16}},\ \bibinfo {pages} {021050} (\bibinfo {year} {2026}{\natexlab{a}})}\BibitemShut {NoStop}%
\bibitem [{\citenamefont {Schmid}\ and\ \citenamefont {Spekkens}(2018)}]{schmid2018contextual}%
  \BibitemOpen
  \bibfield  {author} {\bibinfo {author} {\bibfnamefont {D.}~\bibnamefont {Schmid}}\ and\ \bibinfo {author} {\bibfnamefont {R.~W.}\ \bibnamefont {Spekkens}},\ }\bibfield  {title} {\enquote {\bibinfo {title} {{Contextual advantage for state discrimination}},}\ }\href {https://doi.org/10.1103/PhysRevX.8.011015} {\bibfield  {journal} {\bibinfo  {journal} {Phys. Rev. X}\ }\textbf {\bibinfo {volume} {8}},\ \bibinfo {pages} {011015} (\bibinfo {year} {2018})}\BibitemShut {NoStop}%
\bibitem [{\citenamefont {Spekkens}\ \emph {et~al.}(2009)\citenamefont {Spekkens}, \citenamefont {Buzacott}, \citenamefont {Keehn}, \citenamefont {Toner},\ and\ \citenamefont {Pryde}}]{spekkens2009preparation}%
  \BibitemOpen
  \bibfield  {author} {\bibinfo {author} {\bibfnamefont {R.~W.}\ \bibnamefont {Spekkens}}, \bibinfo {author} {\bibfnamefont {D.~H.}\ \bibnamefont {Buzacott}}, \bibinfo {author} {\bibfnamefont {A.~J.}\ \bibnamefont {Keehn}}, \bibinfo {author} {\bibfnamefont {B.}~\bibnamefont {Toner}},\ and\ \bibinfo {author} {\bibfnamefont {G.~J.}\ \bibnamefont {Pryde}},\ }\bibfield  {title} {\enquote {\bibinfo {title} {{Preparation contextuality powers parity-oblivious multiplexing}},}\ }\href {https://doi.org/10.1103/PhysRevLett.102.010401} {\bibfield  {journal} {\bibinfo  {journal} {Phys. Rev. Lett.}\ }\textbf {\bibinfo {volume} {102}},\ \bibinfo {pages} {010401} (\bibinfo {year} {2009})}\BibitemShut {NoStop}%
\bibitem [{\citenamefont {Saha}\ and\ \citenamefont {Chaturvedi}(2019)}]{saha2019preparation}%
  \BibitemOpen
  \bibfield  {author} {\bibinfo {author} {\bibfnamefont {D.}~\bibnamefont {Saha}}\ and\ \bibinfo {author} {\bibfnamefont {A.}~\bibnamefont {Chaturvedi}},\ }\bibfield  {title} {\enquote {\bibinfo {title} {Preparation contextuality as an essential feature underlying quantum communication advantage},}\ }\href {https://doi.org/10.1103/PhysRevA.100.022108} {\bibfield  {journal} {\bibinfo  {journal} {Physical Review A}\ }\textbf {\bibinfo {volume} {100}},\ \bibinfo {pages} {022108} (\bibinfo {year} {2019})}\BibitemShut {NoStop}%
\bibitem [{\citenamefont {Troupe}\ and\ \citenamefont {Farinholt}(2015)}]{troupe2015contextuality}%
  \BibitemOpen
  \bibfield  {author} {\bibinfo {author} {\bibfnamefont {J.~E.}\ \bibnamefont {Troupe}}\ and\ \bibinfo {author} {\bibfnamefont {J.}~\bibnamefont {Farinholt}},\ }\bibfield  {title} {\enquote {\bibinfo {title} {{A contextuality based quantum key distribution protocol}},}\ }\href {https://arxiv.org/pdf/1512.02256} {\bibfield  {journal} {\bibinfo  {journal} {arXiv preprint arXiv:1512.02256}\ } (\bibinfo {year} {2015})}\BibitemShut {NoStop}%
\bibitem [{\citenamefont {Chaturvedi}\ \emph {et~al.}(2021)\citenamefont {Chaturvedi}, \citenamefont {Farkas},\ and\ \citenamefont {Wright}}]{chaturvedi2021characterising}%
  \BibitemOpen
  \bibfield  {author} {\bibinfo {author} {\bibfnamefont {A.}~\bibnamefont {Chaturvedi}}, \bibinfo {author} {\bibfnamefont {M.}~\bibnamefont {Farkas}},\ and\ \bibinfo {author} {\bibfnamefont {V.~J.}\ \bibnamefont {Wright}},\ }\bibfield  {title} {\enquote {\bibinfo {title} {{Characterising and bounding the set of quantum behaviours in contextuality scenarios}},}\ }\href {https://quantum-journal.org/papers/q-2021-06-29-484/} {\bibfield  {journal} {\bibinfo  {journal} {Quantum}\ }\textbf {\bibinfo {volume} {5}},\ \bibinfo {pages} {484} (\bibinfo {year} {2021})}\BibitemShut {NoStop}%
\bibitem [{\citenamefont {Mazurek}\ \emph {et~al.}(2021)\citenamefont {Mazurek}, \citenamefont {Pusey}, \citenamefont {Resch},\ and\ \citenamefont {Spekkens}}]{MazurekEtAl2021}%
  \BibitemOpen
  \bibfield  {author} {\bibinfo {author} {\bibfnamefont {M.~D.}\ \bibnamefont {Mazurek}}, \bibinfo {author} {\bibfnamefont {M.~F.}\ \bibnamefont {Pusey}}, \bibinfo {author} {\bibfnamefont {K.~J.}\ \bibnamefont {Resch}},\ and\ \bibinfo {author} {\bibfnamefont {R.~W.}\ \bibnamefont {Spekkens}},\ }\bibfield  {title} {\enquote {\bibinfo {title} {{Experimentally Bounding Deviations From Quantum Theory in the Landscape of Generalized Probabilistic Theories}},}\ }\href {https://doi.org/10.1103/PRXQuantum.2.020302} {\bibfield  {journal} {\bibinfo  {journal} {PRX Quantum}\ }\textbf {\bibinfo {volume} {2}},\ \bibinfo {pages} {020302} (\bibinfo {year} {2021})}\BibitemShut {NoStop}%
\bibitem [{\citenamefont {Zhang}\ \emph {et~al.}(2026{\natexlab{b}})\citenamefont {Zhang}, \citenamefont {Spodek}, \citenamefont {Schmid}, \citenamefont {Reid}, \citenamefont {Morrison}, \citenamefont {Jennewein}, \citenamefont {Resch},\ and\ \citenamefont {Spekkens}}]{zhang2026entanglement}%
  \BibitemOpen
  \bibfield  {author} {\bibinfo {author} {\bibfnamefont {Y.}~\bibnamefont {Zhang}}, \bibinfo {author} {\bibfnamefont {J.}~\bibnamefont {Spodek}}, \bibinfo {author} {\bibfnamefont {D.}~\bibnamefont {Schmid}}, \bibinfo {author} {\bibfnamefont {C.}~\bibnamefont {Reid}}, \bibinfo {author} {\bibfnamefont {L.~J.}\ \bibnamefont {Morrison}}, \bibinfo {author} {\bibfnamefont {T.}~\bibnamefont {Jennewein}}, \bibinfo {author} {\bibfnamefont {K.~J.}\ \bibnamefont {Resch}},\ and\ \bibinfo {author} {\bibfnamefont {R.~W.}\ \bibnamefont {Spekkens}},\ }\bibfield  {title} {\enquote {\bibinfo {title} {Entanglement certification using noncontextuality inequalities},}\ }\href {https://doi.org/10.1103/dxpr-wp6l} {\bibfield  {journal} {\bibinfo  {journal} {Physical Review X}\ }\textbf {\bibinfo {volume} {16}},\ \bibinfo {pages} {031057} (\bibinfo {year} {2026}{\natexlab{b}})}\BibitemShut {NoStop}%
\bibitem [{\citenamefont {Paw{\l}owski}\ and\ \citenamefont {Brunner}(2011)}]{PawlowskiBrunner2011}%
  \BibitemOpen
  \bibfield  {author} {\bibinfo {author} {\bibfnamefont {M.}~\bibnamefont {Paw{\l}owski}}\ and\ \bibinfo {author} {\bibfnamefont {N.}~\bibnamefont {Brunner}},\ }\bibfield  {title} {\enquote {\bibinfo {title} {{Semi-device-independent security of one-way quantum key distribution}},}\ }\href {https://doi.org/10.1103/PhysRevA.84.010302} {\bibfield  {journal} {\bibinfo  {journal} {Phys. Rev. A}\ }\textbf {\bibinfo {volume} {84}},\ \bibinfo {pages} {010302(R)} (\bibinfo {year} {2011})}\BibitemShut {NoStop}%
\bibitem [{\citenamefont {Woodhead}\ and\ \citenamefont {Pironio}(2015)}]{WoodheadPironio2015}%
  \BibitemOpen
  \bibfield  {author} {\bibinfo {author} {\bibfnamefont {E.}~\bibnamefont {Woodhead}}\ and\ \bibinfo {author} {\bibfnamefont {S.}~\bibnamefont {Pironio}},\ }\bibfield  {title} {\enquote {\bibinfo {title} {{Secrecy in Prepare-and-Measure Clauser-Horne-Shimony-Holt Tests with a Qubit Bound}},}\ }\href {https://doi.org/10.1103/PhysRevLett.115.150501} {\bibfield  {journal} {\bibinfo  {journal} {Phys. Rev. Lett.}\ }\textbf {\bibinfo {volume} {115}},\ \bibinfo {pages} {150501} (\bibinfo {year} {2015})}\BibitemShut {NoStop}%
\bibitem [{\citenamefont {Woodhead}(2016)}]{Woodhead2016}%
  \BibitemOpen
  \bibfield  {author} {\bibinfo {author} {\bibfnamefont {E.}~\bibnamefont {Woodhead}},\ }\bibfield  {title} {\enquote {\bibinfo {title} {{Semi device independence of the BB84 protocol}},}\ }\href {https://doi.org/10.1088/1367-2630/18/5/055010} {\bibfield  {journal} {\bibinfo  {journal} {New J. Phys.}\ }\textbf {\bibinfo {volume} {18}},\ \bibinfo {pages} {055010} (\bibinfo {year} {2016})}\BibitemShut {NoStop}%
\bibitem [{\citenamefont {Ioannou}\ \emph {et~al.}(2022{\natexlab{a}})\citenamefont {Ioannou}, \citenamefont {Pereira}, \citenamefont {Rusca}, \citenamefont {Gr{\"u}nenfelder}, \citenamefont {Boaron}, \citenamefont {Perrenoud}, \citenamefont {Abbott}, \citenamefont {Sekatski}, \citenamefont {Bancal}, \citenamefont {Maring}, \citenamefont {Zbinden},\ and\ \citenamefont {Brunner}}]{Ioannou2022RDI}%
  \BibitemOpen
  \bibfield  {author} {\bibinfo {author} {\bibfnamefont {M.}~\bibnamefont {Ioannou}}, \bibinfo {author} {\bibfnamefont {M.~A.}\ \bibnamefont {Pereira}}, \bibinfo {author} {\bibfnamefont {D.}~\bibnamefont {Rusca}}, \bibinfo {author} {\bibfnamefont {F.}~\bibnamefont {Gr{\"u}nenfelder}}, \bibinfo {author} {\bibfnamefont {A.}~\bibnamefont {Boaron}}, \bibinfo {author} {\bibfnamefont {M.}~\bibnamefont {Perrenoud}}, \bibinfo {author} {\bibfnamefont {A.~A.}\ \bibnamefont {Abbott}}, \bibinfo {author} {\bibfnamefont {P.}~\bibnamefont {Sekatski}}, \bibinfo {author} {\bibfnamefont {J.-D.}\ \bibnamefont {Bancal}}, \bibinfo {author} {\bibfnamefont {N.}~\bibnamefont {Maring}}, \bibinfo {author} {\bibfnamefont {H.}~\bibnamefont {Zbinden}},\ and\ \bibinfo {author} {\bibfnamefont {N.}~\bibnamefont {Brunner}},\ }\bibfield  {title} {\enquote {\bibinfo {title} {{Receiver-Device-Independent Quantum Key Distribution}},}\ }\href {https://doi.org/10.22331/q-2022-05-24-718} {\bibfield  {journal} {\bibinfo  {journal} {Quantum}\
  }\textbf {\bibinfo {volume} {6}},\ \bibinfo {pages} {718} (\bibinfo {year} {2022}{\natexlab{a}})}\BibitemShut {NoStop}%
\bibitem [{\citenamefont {Ioannou}\ \emph {et~al.}(2022{\natexlab{b}})\citenamefont {Ioannou}, \citenamefont {Sekatski}, \citenamefont {Abbott}, \citenamefont {Rosset}, \citenamefont {Bancal},\ and\ \citenamefont {Brunner}}]{Ioannou2022RDIProtocols}%
  \BibitemOpen
  \bibfield  {author} {\bibinfo {author} {\bibfnamefont {M.}~\bibnamefont {Ioannou}}, \bibinfo {author} {\bibfnamefont {P.}~\bibnamefont {Sekatski}}, \bibinfo {author} {\bibfnamefont {A.~A.}\ \bibnamefont {Abbott}}, \bibinfo {author} {\bibfnamefont {D.}~\bibnamefont {Rosset}}, \bibinfo {author} {\bibfnamefont {J.-D.}\ \bibnamefont {Bancal}},\ and\ \bibinfo {author} {\bibfnamefont {N.}~\bibnamefont {Brunner}},\ }\bibfield  {title} {\enquote {\bibinfo {title} {{Receiver-device-independent quantum key distribution protocols}},}\ }\href {https://doi.org/10.1088/1367-2630/ac71bc} {\bibfield  {journal} {\bibinfo  {journal} {New J. Phys.}\ }\textbf {\bibinfo {volume} {24}},\ \bibinfo {pages} {063006} (\bibinfo {year} {2022}{\natexlab{b}})}\BibitemShut {NoStop}%
\bibitem [{\citenamefont {Van~Himbeeck}\ \emph {et~al.}(2017)\citenamefont {Van~Himbeeck}, \citenamefont {Woodhead}, \citenamefont {Cerf}, \citenamefont {Garc{\'{i}}a-Patr{\'{o}}n},\ and\ \citenamefont {Pironio}}]{VanHimbeeck2017}%
  \BibitemOpen
  \bibfield  {author} {\bibinfo {author} {\bibfnamefont {T.}~\bibnamefont {Van~Himbeeck}}, \bibinfo {author} {\bibfnamefont {E.}~\bibnamefont {Woodhead}}, \bibinfo {author} {\bibfnamefont {N.~J.}\ \bibnamefont {Cerf}}, \bibinfo {author} {\bibfnamefont {R.}~\bibnamefont {Garc{\'{i}}a-Patr{\'{o}}n}},\ and\ \bibinfo {author} {\bibfnamefont {S.}~\bibnamefont {Pironio}},\ }\bibfield  {title} {\enquote {\bibinfo {title} {{Semi-device-independent framework based on natural physical assumptions}},}\ }\href {https://doi.org/10.22331/q-2017-11-18-33} {\bibfield  {journal} {\bibinfo  {journal} {{Quantum}}\ }\textbf {\bibinfo {volume} {1}},\ \bibinfo {pages} {33} (\bibinfo {year} {2017})}\BibitemShut {NoStop}%
\bibitem [{\citenamefont {Ac{\'\i}n}\ \emph {et~al.}(2007)\citenamefont {Ac{\'\i}n}, \citenamefont {Brunner}, \citenamefont {Gisin}, \citenamefont {Massar}, \citenamefont {Pironio},\ and\ \citenamefont {Scarani}}]{acin2007device}%
  \BibitemOpen
  \bibfield  {author} {\bibinfo {author} {\bibfnamefont {A.}~\bibnamefont {Ac{\'\i}n}}, \bibinfo {author} {\bibfnamefont {N.}~\bibnamefont {Brunner}}, \bibinfo {author} {\bibfnamefont {N.}~\bibnamefont {Gisin}}, \bibinfo {author} {\bibfnamefont {S.}~\bibnamefont {Massar}}, \bibinfo {author} {\bibfnamefont {S.}~\bibnamefont {Pironio}},\ and\ \bibinfo {author} {\bibfnamefont {V.}~\bibnamefont {Scarani}},\ }\bibfield  {title} {\enquote {\bibinfo {title} {{Device-independent security of quantum cryptography against collective attacks}},}\ }\href {https://doi.org/10.1103/PhysRevLett.98.230501} {\bibfield  {journal} {\bibinfo  {journal} {Phys. Rev. Lett.}\ }\textbf {\bibinfo {volume} {98}},\ \bibinfo {pages} {230501} (\bibinfo {year} {2007})}\BibitemShut {NoStop}%
\bibitem [{\citenamefont {Primaatmaja}\ \emph {et~al.}(2023)\citenamefont {Primaatmaja}, \citenamefont {Goh}, \citenamefont {Tan}, \citenamefont {Khoo}, \citenamefont {Ghorai},\ and\ \citenamefont {Lim}}]{Primaatmaja2023security}%
  \BibitemOpen
  \bibfield  {author} {\bibinfo {author} {\bibfnamefont {I.~W.}\ \bibnamefont {Primaatmaja}}, \bibinfo {author} {\bibfnamefont {K.~T.}\ \bibnamefont {Goh}}, \bibinfo {author} {\bibfnamefont {E.~Y.-Z.}\ \bibnamefont {Tan}}, \bibinfo {author} {\bibfnamefont {J.~T.-F.}\ \bibnamefont {Khoo}}, \bibinfo {author} {\bibfnamefont {S.}~\bibnamefont {Ghorai}},\ and\ \bibinfo {author} {\bibfnamefont {C.~C.-W.}\ \bibnamefont {Lim}},\ }\bibfield  {title} {\enquote {\bibinfo {title} {{Security of device-independent quantum key distribution protocols: a review}},}\ }\href {https://doi.org/10.22331/q-2023-03-02-932} {\bibfield  {journal} {\bibinfo  {journal} {{Quantum}}\ }\textbf {\bibinfo {volume} {7}},\ \bibinfo {pages} {932} (\bibinfo {year} {2023})}\BibitemShut {NoStop}%
\bibitem [{\citenamefont {Csisz{\'a}r}\ and\ \citenamefont {Korner}(1978)}]{csiszar1978broadcast}%
  \BibitemOpen
  \bibfield  {author} {\bibinfo {author} {\bibfnamefont {I.}~\bibnamefont {Csisz{\'a}r}}\ and\ \bibinfo {author} {\bibfnamefont {J.}~\bibnamefont {Korner}},\ }\bibfield  {title} {\enquote {\bibinfo {title} {{Broadcast channels with confidential messages}},}\ }\href {https://ieeexplore.ieee.org/document/1055892} {\bibfield  {journal} {\bibinfo  {journal} {IEEE transactions on information theory}\ }\textbf {\bibinfo {volume} {24}},\ \bibinfo {pages} {339} (\bibinfo {year} {1978})}\BibitemShut {NoStop}%
\bibitem [{\citenamefont {Feder}\ and\ \citenamefont {Merhav}(1994)}]{feder1994relations}%
  \BibitemOpen
  \bibfield  {author} {\bibinfo {author} {\bibfnamefont {M.}~\bibnamefont {Feder}}\ and\ \bibinfo {author} {\bibfnamefont {N.}~\bibnamefont {Merhav}},\ }\bibfield  {title} {\enquote {\bibinfo {title} {{Relations between entropy and error probability}},}\ }\href {https://ieeexplore.ieee.org/document/272494} {\bibfield  {journal} {\bibinfo  {journal} {IEEE Transactions on Information theory}\ }\textbf {\bibinfo {volume} {40}},\ \bibinfo {pages} {259} (\bibinfo {year} {1994})}\BibitemShut {NoStop}%
\bibitem [{\citenamefont {Rioul}(2023)}]{rioul2023interplay}%
  \BibitemOpen
  \bibfield  {author} {\bibinfo {author} {\bibfnamefont {O.}~\bibnamefont {Rioul}},\ }\bibfield  {title} {\enquote {\bibinfo {title} {{The interplay between error, total variation, alpha-entropy and guessing: Fano and Pinsker direct and reverse inequalities}},}\ }\href {https://www.mdpi.com/1099-4300/25/7/978} {\bibfield  {journal} {\bibinfo  {journal} {Entropy}\ }\textbf {\bibinfo {volume} {25}},\ \bibinfo {pages} {978} (\bibinfo {year} {2023})}\BibitemShut {NoStop}%
\bibitem [{\citenamefont {Tomamichel}\ \emph {et~al.}(2011)\citenamefont {Tomamichel}, \citenamefont {Schaffner}, \citenamefont {Smith},\ and\ \citenamefont {Renner}}]{Tomamichel2011fbe}%
  \BibitemOpen
  \bibfield  {author} {\bibinfo {author} {\bibfnamefont {M.}~\bibnamefont {Tomamichel}}, \bibinfo {author} {\bibfnamefont {C.}~\bibnamefont {Schaffner}}, \bibinfo {author} {\bibfnamefont {A.}~\bibnamefont {Smith}},\ and\ \bibinfo {author} {\bibfnamefont {R.}~\bibnamefont {Renner}},\ }\bibfield  {title} {\enquote {\bibinfo {title} {{Leftover Hashing Against Quantum Side Information}},}\ }\href {https://doi.org/10.1109/TIT.2011.2158473} {\bibfield  {journal} {\bibinfo  {journal} {IEEE Trans. Info. Theor.}\ }\textbf {\bibinfo {volume} {57}},\ \bibinfo {pages} {5524} (\bibinfo {year} {2011})}\BibitemShut {NoStop}%
\bibitem [{\citenamefont {Tomamichel}\ \emph {et~al.}(2012)\citenamefont {Tomamichel}, \citenamefont {Lim}, \citenamefont {Gisin},\ and\ \citenamefont {Renner}}]{tomamichel2012tight}%
  \BibitemOpen
  \bibfield  {author} {\bibinfo {author} {\bibfnamefont {M.}~\bibnamefont {Tomamichel}}, \bibinfo {author} {\bibfnamefont {C.~C.~W.}\ \bibnamefont {Lim}}, \bibinfo {author} {\bibfnamefont {N.}~\bibnamefont {Gisin}},\ and\ \bibinfo {author} {\bibfnamefont {R.}~\bibnamefont {Renner}},\ }\bibfield  {title} {\enquote {\bibinfo {title} {{Tight finite-key analysis for quantum cryptography}},}\ }\href {https://www.nature.com/articles/ncomms1631} {\bibfield  {journal} {\bibinfo  {journal} {Nature Comm.}\ }\textbf {\bibinfo {volume} {3}},\ \bibinfo {pages} {634} (\bibinfo {year} {2012})}\BibitemShut {NoStop}%
\bibitem [{\citenamefont {Wolfe}\ and\ \citenamefont {collaborators}(2026)}]{ContextualityQKDGitHub}%
  \BibitemOpen
  \bibfield  {author} {\bibinfo {author} {\bibfnamefont {E.}~\bibnamefont {Wolfe}}\ and\ \bibinfo {author} {\bibnamefont {collaborators}},\ }\href@noop {} {\enquote {\bibinfo {title} {{contextualityqkd: Bob-Outcome QKD from Contextuality}},}\ }\bibinfo {howpublished} {\url{https://github.com/eliewolfe/RandomnessWinterSchool2026}} (\bibinfo {year} {2026}),\ \bibinfo {note} {gitHub repository}\BibitemShut {NoStop}%
\bibitem [{\citenamefont {Mazurek}\ \emph {et~al.}(2016)\citenamefont {Mazurek}, \citenamefont {Pusey}, \citenamefont {Kunjwal}, \citenamefont {Resch},\ and\ \citenamefont {Spekkens}}]{mazurek2016experimental}%
  \BibitemOpen
  \bibfield  {author} {\bibinfo {author} {\bibfnamefont {M.~D.}\ \bibnamefont {Mazurek}}, \bibinfo {author} {\bibfnamefont {M.~F.}\ \bibnamefont {Pusey}}, \bibinfo {author} {\bibfnamefont {R.}~\bibnamefont {Kunjwal}}, \bibinfo {author} {\bibfnamefont {K.~J.}\ \bibnamefont {Resch}},\ and\ \bibinfo {author} {\bibfnamefont {R.~W.}\ \bibnamefont {Spekkens}},\ }\bibfield  {title} {\enquote {\bibinfo {title} {An experimental test of noncontextuality without unphysical idealizations},}\ }\href {https://www.nature.com/articles/ncomms11780} {\bibfield  {journal} {\bibinfo  {journal} {Nature communications}\ }\textbf {\bibinfo {volume} {7}},\ \bibinfo {pages} {ncomms11780} (\bibinfo {year} {2016})}\BibitemShut {NoStop}%
\bibitem [{\citenamefont {Cabello}\ \emph {et~al.}(1996)\citenamefont {Cabello}, \citenamefont {Estebaranz},\ and\ \citenamefont {Garc{\'\i}a-Alcaine}}]{cabello1996bell}%
  \BibitemOpen
  \bibfield  {author} {\bibinfo {author} {\bibfnamefont {A.}~\bibnamefont {Cabello}}, \bibinfo {author} {\bibfnamefont {J.}~\bibnamefont {Estebaranz}},\ and\ \bibinfo {author} {\bibfnamefont {G.}~\bibnamefont {Garc{\'\i}a-Alcaine}},\ }\bibfield  {title} {\enquote {\bibinfo {title} {Bell-kochen-specker theorem: A proof with 18 vectors},}\ }\href {https://www.sciencedirect.com/science/article/pii/037596019600134X?via%3Dihub} {\bibfield  {journal} {\bibinfo  {journal} {Physics Letters A}\ }\textbf {\bibinfo {volume} {212}},\ \bibinfo {pages} {183} (\bibinfo {year} {1996})}\BibitemShut {NoStop}%
\bibitem [{\citenamefont {Peres}(1991)}]{peres1991two}%
  \BibitemOpen
  \bibfield  {author} {\bibinfo {author} {\bibfnamefont {A.}~\bibnamefont {Peres}},\ }\bibfield  {title} {\enquote {\bibinfo {title} {Two simple proofs of the kochen-specker theorem},}\ }\href {https://iopscience.iop.org/article/10.1088/0305-4470/24/4/003} {\bibfield  {journal} {\bibinfo  {journal} {Journal of Physics A: Mathematical and General}\ }\textbf {\bibinfo {volume} {24}},\ \bibinfo {pages} {L175} (\bibinfo {year} {1991})}\BibitemShut {NoStop}%
\bibitem [{\citenamefont {Pusey}(2018)}]{Pusey2018}%
  \BibitemOpen
  \bibfield  {author} {\bibinfo {author} {\bibfnamefont {M.~F.}\ \bibnamefont {Pusey}},\ }\bibfield  {title} {\enquote {\bibinfo {title} {{Robust preparation noncontextuality inequalities in the simplest scenario}},}\ }\href {https://doi.org/10.1103/PhysRevA.98.022112} {\bibfield  {journal} {\bibinfo  {journal} {Phys. Rev. A}\ }\textbf {\bibinfo {volume} {98}},\ \bibinfo {pages} {022112} (\bibinfo {year} {2018})}\BibitemShut {NoStop}%
\bibitem [{\citenamefont {Lo}\ \emph {et~al.}(2005)\citenamefont {Lo}, \citenamefont {Chau},\ and\ \citenamefont {Ardehali}}]{LoChauArdehali2005}%
  \BibitemOpen
  \bibfield  {author} {\bibinfo {author} {\bibfnamefont {H.-K.}\ \bibnamefont {Lo}}, \bibinfo {author} {\bibfnamefont {H.~F.}\ \bibnamefont {Chau}},\ and\ \bibinfo {author} {\bibfnamefont {M.}~\bibnamefont {Ardehali}},\ }\bibfield  {title} {\enquote {\bibinfo {title} {{Efficient Quantum Key Distribution Scheme and a Proof of Its Unconditional Security}},}\ }\href {https://doi.org/10.1007/s00145-004-0142-y} {\bibfield  {journal} {\bibinfo  {journal} {J. Cryptology}\ }\textbf {\bibinfo {volume} {18}},\ \bibinfo {pages} {133} (\bibinfo {year} {2005})}\BibitemShut {NoStop}%
\bibitem [{\citenamefont {Xu}\ \emph {et~al.}(2020)\citenamefont {Xu}, \citenamefont {Ma}, \citenamefont {Zhang}, \citenamefont {Lo},\ and\ \citenamefont {Pan}}]{Xu2020}%
  \BibitemOpen
  \bibfield  {author} {\bibinfo {author} {\bibfnamefont {F.}~\bibnamefont {Xu}}, \bibinfo {author} {\bibfnamefont {X.}~\bibnamefont {Ma}}, \bibinfo {author} {\bibfnamefont {Q.}~\bibnamefont {Zhang}}, \bibinfo {author} {\bibfnamefont {H.-K.}\ \bibnamefont {Lo}},\ and\ \bibinfo {author} {\bibfnamefont {J.-W.}\ \bibnamefont {Pan}},\ }\bibfield  {title} {\enquote {\bibinfo {title} {{Secure quantum key distribution with realistic devices}},}\ }\href {https://doi.org/10.1103/RevModPhys.92.025002} {\bibfield  {journal} {\bibinfo  {journal} {Rev. Mod. Phys.}\ }\textbf {\bibinfo {volume} {92}},\ \bibinfo {pages} {025002} (\bibinfo {year} {2020})}\BibitemShut {NoStop}%
\bibitem [{\citenamefont {Spekkens}(2026)}]{spekkens2026against}%
  \BibitemOpen
  \bibfield  {author} {\bibinfo {author} {\bibfnamefont {R.~W.}\ \bibnamefont {Spekkens}},\ }\bibfield  {title} {\enquote {\bibinfo {title} {Against device independence},}\ } (\bibinfo {year} {2026}),\ \bibinfo {note} {forthcoming}\BibitemShut {NoStop}%
\bibitem [{\citenamefont {H{\"a}nggi}\ \emph {et~al.}(2013)\citenamefont {H{\"a}nggi}, \citenamefont {Renner},\ and\ \citenamefont {Wolf}}]{hanggi2013impossibility}%
  \BibitemOpen
  \bibfield  {author} {\bibinfo {author} {\bibfnamefont {E.}~\bibnamefont {H{\"a}nggi}}, \bibinfo {author} {\bibfnamefont {R.}~\bibnamefont {Renner}},\ and\ \bibinfo {author} {\bibfnamefont {S.}~\bibnamefont {Wolf}},\ }\bibfield  {title} {\enquote {\bibinfo {title} {{The impossibility of non-signaling privacy amplification}},}\ }\href {https://www.sciencedirect.com/science/article/pii/S0304397512011073} {\bibfield  {journal} {\bibinfo  {journal} {Theor. Comp. Sci.}\ }\textbf {\bibinfo {volume} {486}},\ \bibinfo {pages} {27} (\bibinfo {year} {2013})}\BibitemShut {NoStop}%
\bibitem [{\citenamefont {Masanes}(2009)}]{masanes2009universally}%
  \BibitemOpen
  \bibfield  {author} {\bibinfo {author} {\bibfnamefont {L.}~\bibnamefont {Masanes}},\ }\bibfield  {title} {\enquote {\bibinfo {title} {{Universally composable privacy amplification from causality constraints}},}\ }\href {https://doi.org/10.1103/PhysRevLett.102.140501} {\bibfield  {journal} {\bibinfo  {journal} {Phys. Rev. Lett.}\ }\textbf {\bibinfo {volume} {102}},\ \bibinfo {pages} {140501} (\bibinfo {year} {2009})}\BibitemShut {NoStop}%
\bibitem [{\citenamefont {Barrett}(2007)}]{barrett2007information}%
  \BibitemOpen
  \bibfield  {author} {\bibinfo {author} {\bibfnamefont {J.}~\bibnamefont {Barrett}},\ }\bibfield  {title} {\enquote {\bibinfo {title} {Information processing in generalized probabilistic theories},}\ }\href {https://doi.org/10.1103/PhysRevA.75.032304} {\bibfield  {journal} {\bibinfo  {journal} {Physical Review A—Atomic, Molecular, and Optical Physics}\ }\textbf {\bibinfo {volume} {75}},\ \bibinfo {pages} {032304} (\bibinfo {year} {2007})}\BibitemShut {NoStop}%
\bibitem [{\citenamefont {Schmid}\ \emph {et~al.}(2018)\citenamefont {Schmid}, \citenamefont {Spekkens},\ and\ \citenamefont {Wolfe}}]{schmid2018all}%
  \BibitemOpen
  \bibfield  {author} {\bibinfo {author} {\bibfnamefont {D.}~\bibnamefont {Schmid}}, \bibinfo {author} {\bibfnamefont {R.~W.}\ \bibnamefont {Spekkens}},\ and\ \bibinfo {author} {\bibfnamefont {E.}~\bibnamefont {Wolfe}},\ }\bibfield  {title} {\enquote {\bibinfo {title} {All the noncontextuality inequalities for arbitrary prepare-and-measure experiments with respect to any fixed set of operational equivalences},}\ }\href {https://doi.org/10.1103/PhysRevA.97.062103} {\bibfield  {journal} {\bibinfo  {journal} {Physical Review A}\ }\textbf {\bibinfo {volume} {97}},\ \bibinfo {pages} {062103} (\bibinfo {year} {2018})}\BibitemShut {NoStop}%
\bibitem [{\citenamefont {Maurer}(1993)}]{maurer1993secret}%
  \BibitemOpen
  \bibfield  {author} {\bibinfo {author} {\bibfnamefont {U.~M.}\ \bibnamefont {Maurer}},\ }\bibfield  {title} {\enquote {\bibinfo {title} {Secret key agreement by public discussion from common information},}\ }\href {https://ieeexplore.ieee.org/document/256484} {\bibfield  {journal} {\bibinfo  {journal} {IEEE transactions on information theory}\ }\textbf {\bibinfo {volume} {39}},\ \bibinfo {pages} {733} (\bibinfo {year} {1993})}\BibitemShut {NoStop}%
\end{thebibliography}%

\appendix
\section{Decomposition of the adversarial optimization into fixed-readout linear programs}
\label{app:fixed_setting}

In this appendix, we prove that the optimization in Eq.~\eqref{eq:adversarial_LP} separates into one linear program per key-generating setting. We also show that the resulting optimal strategy is not an artifact of leaving the underlying theory unspecified, it is realizable within an explicit generalized probabilistic theory (GPT).

\subsection{Reduction to fixed-readout programs}

Recall that in an individual attack, Eve interacts with the system leaving Alice's laboratory before Bob's setting is announced. Thus, this interaction is independent of $y$ and produces a system forwarded to Bob together with a system retained by Eve. After Bob has measured and announced $y$, Eve chooses a readout $y'$ on her retained system and obtains $e$. The complete single-round behaviour is $p(b,e|x,y,y')$, where $y'$ labels only Eve's delayed readout. For each readout choice $y'$ we call
\begin{equation}
    P^{y'}(b,e|x,y) \;\equiv\; p(b,e|x,y,y')
    \label{eq:app_slice}
\end{equation}
the \emph{$y'$-slice} of the behaviour. Importantly, the constraints of the adversarial model act slice by slice, that is, there are no constraints between different slices. Let $\mathcal F$ denote the set of behaviours $P(b,e|x,y)$ satisfying
\begin{subequations}
\label{eq:app_F}
\begin{align}
    &P(b,e|x,y)\ge 0, \qquad \textstyle\sum_{b,e}P(b,e|x,y)=1, \\
    &\textstyle\sum_{e} P(b,e|x,y) = p_{\mathrm{obs}}(b|x,y), \label{eq:app_F_marg}\\
    &\textstyle\sum_{x}\alpha_x P(b,e|x,y) = \sum_{x}\alpha'_x P(b,e|x,y)\label{eq:app_F_prep}\\
    \quad&\forall\text{ preparation OPEQ}\nonumber,\\
    &\textstyle\sum_{b,y}\beta_{by} P(b,e|x,y) = \sum_{b,y}\beta'_{by} P(b,e|x,y)\label{eq:app_F_meas}\\
    \quad&\forall\text{ measurement OPEQ}\nonumber,\\
    &\textstyle\sum_{b} P(b,e|x,y) = \sum_{b} P(b,e|x,\tilde y)
    \qquad \forall\, y,\tilde y . \label{eq:app_F_NS}
\end{align}
\end{subequations}
Eq.~\eqref{eq:app_F_NS} expresses the no-signaling relation from $y$ to Eve, that is, for a fixed setting for Eve, Bob's setting cannot influence Eve's outcome. Note that it is implied by the trivial measurement OPEQs, i.e. normalization, $\sum_b M^b_y\simeq\sum_b M^b_{\tilde y}$, but we keep it explicit. A delayed-readout behaviour is admissible iff every slice lies in $\mathcal F$,
\begin{equation}
    P^{y'}\in\mathcal F \qquad\forall\,y' .
    \label{eq:app_admissible}
\end{equation}
Note that the delayed-readout model requires, in addition, that Eve's readout choice cannot influence Bob's statistics, $\sum_e p(b,e|x,y,y')=\sum_e p(b,e|x,y,\tilde y')$ for all $y',\tilde y'$. This no-signaling condition from $y'$ to $b$ need not be imposed separately as Eq.~\eqref{eq:app_F_marg} fixes the Bob marginal of every slice to the same observed distribution $p_{\mathrm{obs}}(b|x,y)$, so it holds automatically once Eq.~\eqref{eq:app_admissible} does.

Let us now describe the objective of the fixed-setting linear program. For a target setting $y^\star$, define the linear functional
\begin{equation}
    f_{y^\star}[P] \;=\; \sum_{x\in\mathcal X^{y^\star}_{\mathrm{key}}}
    q(x|y^\star)\; P\big(e=\kappa(x,y^\star)\,\big|\,x,y^\star\big).
    \label{eq:app_f}
\end{equation}
Then, the \emph{fixed-setting linear program} for $y^\star$ is
\begin{equation}
    P^{(y^\star)}_{\mathrm{guess}} \;=\; \max_{P\in\mathcal F} f_{y^\star}[P],
    \label{eq:app_fixed_LP}
\end{equation}
and we write $P^{y^\star}_\star\in\mathcal F$ for the full solution of the previous linear program. In other words, $P^{(y^\star)}_{\mathrm{guess}}$ is a number, the optimal value of the program with target setting $y^\star$, whereas $P^{y^\star}_\star$ is a full probability table $P^{y^\star}_\star(b,e|x,y)$, defined for every setting $y$ (not only for $y=y^\star$), at which that value is attained. The entries with $y\neq y^\star$ do not enter the objective but are needed for the OPEQs and the no-signaling constraint, which sum over all settings. Since Eve chooses her readout after Bob announces $y$, a round with setting $y$ contributes through the slice $P^{y'}$ with $y'=y$. Thus, her overall guessing probability is %
\begin{equation}
    P_{\mathrm{guess}}(K|E,Y) \;=\; \max_{\{P^{y'}\}}\;\sum_{y\in\mathcal Y_{\mathrm{key}}} q(y)\, f_y[P^{y}],
    \label{eq:app_full_LP}
\end{equation}
subject to Eq.~\eqref{eq:app_admissible}. 

The previous observations allow us to state the following proposition.
\begin{proposition}
\label{prop:app_separation}
Eve's optimal delayed-readout attack is obtained by optimizing her readout independently for each key-generating setting. That is, her overall guessing probability is the $q(y)$-weighted average of the fixed-readout optimal values,
\begin{equation}
    P_{\mathrm{guess}}(K|E,Y)=\sum_{y\in\mathcal Y_{\mathrm{key}}} q(y)\,P^{(y)}_{\mathrm{guess}},
    \label{eq:app_separation}
\end{equation}
and this value is attained by the strategy in which, for each $y\in\mathcal Y_{\mathrm{key}}$, Eve's $y$-slice is the optimizer $P^{y}_\star$ of the fixed-readout program for that setting.
\end{proposition}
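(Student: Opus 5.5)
The plan is to prove the identity in Eq.~\eqref{eq:app_separation} by two matching inequalities. Both rest on two structural facts from Eqs.~\eqref{eq:app_full_LP} and \eqref{eq:app_admissible}. First, the feasible set of the full program is the Cartesian product $\mathcal F^{|\mathcal Y'|}$ over readout choices $y'$, because admissibility only requires $P^{y'}\in\mathcal F$ for each $y'$ separately, with no coupling constraints. Second, the objective $\sum_{y\in\mathcal Y_{\rm key}} q(y) f_y[P^{y}]$ is a sum of terms, each depending on a single slice.

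For the upper bound, I would take any admissible family $\{P^{y'}\}$. For each $y\in\mathcal Y_{\rm key}$, the slice $P^{y}$ lies in $\mathcal F$, so by the definition in Eq.~\eqref{eq:app_fixed_LP} we have $f_y[P^y]\le P^{(y)}_{\rm guess}$. Multiplying by $q(y)\ge0$ and summing gives
\[
\sum_{y\in\mathcal Y_{\rm key}} q(y) f_y[P^y] \le \sum_{y\in\mathcal Y_{\rm key}} q(y)\,P^{(y)}_{\rm guess}.
\]
Taking the supremum over admissible families then yields ``$\le$''.

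For achievability, I would construct the family explicitly. Set $P^{y'}:=P^{y'}_\star$ for every $y'\in\mathcal Y_{\rm key}$. For readouts $y'\notin\mathcal Y_{\rm key}$, which never enter the objective, take $P^{y'}:=P^{y_0}_\star$ for any fixed key setting $y_0$, or any other element of $\mathcal F$. Each slice lies in $\mathcal F$, so Eq.~\eqref{eq:app_admissible} holds. As noted after Eq.~\eqref{eq:app_admissible}, the no-signaling condition from $y'$ to $b$ follows automatically, because Eq.~\eqref{eq:app_F_marg} fixes every Bob marginal to $p_{\rm obs}$. The objective evaluates to exactly $\sum_y q(y)P^{(y)}_{\rm guess}$, which gives ``$\ge$'' and shows the optimum is attained by the stated strategy.

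The main point needing care is that the maxima exist, so that $P^{y}_\star$ is well defined. I would argue that $\mathcal F$ is a nonempty compact polytope. It is bounded since entries lie in $[0,1]$, and it is closed since it is defined by finitely many linear equalities and inequalities. Nonemptiness holds because the trivial extension $P(b,e|x,y)=p_{\rm obs}(b|x,y)\delta_{e,e_0}$ satisfies every constraint. The marginal and positivity conditions are immediate. The preparation and measurement OPEQs reduce to the observed OPEQs, Eqs.~\eqref{eq:prep_opeq_operational}, tensored with $\delta_{e,e_0}$. The no-signaling condition from $y$ to $e$ holds because $\sum_b p_{\rm obs}(b|x,y)=1$ for every $y$. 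Linear objectives attain their maxima on such sets, which completes the argument.
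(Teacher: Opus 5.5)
Your proposal is correct and is the same argument the paper gives: the admissible set is a product of identical copies of $\mathcal F$, one per readout slice, and the objective is a sum of single-slice terms, so the joint maximum splits into the per-setting fixed-readout maxima. Your explicit two-inequality write-up fills in details the paper leaves implicit, as does your check that $\mathcal F$ is a nonempty compact polytope via the trivial extension $p_{\mathrm{obs}}(b|x,y)\,\delta_{e,e_0}$, which guarantees that the optimizers $P^{y}_\star$ exist.
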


Intuitively, since Eve chooses her readout only after $y$ is announced, and the adversarial model imposes no constraint linking her alternative readouts, she can treat each announced setting as a separate problem. The proof uses exactly these two facts: the admissible set in Eq.~\eqref{eq:app_admissible} is a product of identical constraint sets, one per slice, and the objective is a sum of terms each depending on a single slice. Consequently, $P_{\mathrm{guess}}(K|E,Y)$ is obtained by solving $|\mathcal Y_{\mathrm{key}}|$ independent linear programs, each over a single slice, rather than one joint program over all slices.

The separation is a direct consequence of the absence of inter-slice constraints. In a theory that relates Eve's alternative readouts of the same retained system (for instance, quantum theory) the slices could not be chosen independently, and the joint program would in general yield a strictly smaller value; see Sec.~\ref{app:gpt_realization}.

\subsection{A GPT realizing the optimal strategy}
\label{app:gpt_realization}

The separation relies on the absence of constraints between the different slices, and one may ask whether this makes the adversary unphysically strong. In any concrete physical theory, the slices $\{P^{y'}\}$ are not independent objects, they are the statistics of different measurements performed on the same system that Eve retained. So, it may be the case that the state and effect spaces of that theory do not allow every collection of slices, each individually in $\mathcal F$, to arise from a single state in this way. We now show that no such restriction can arise from the requirement of belonging to a GPT. The optimal family $\{P^{y}_\star\}$ is realizable by a single state and a single set of measurements in an explicit GPT built along the lines of Ref.~\cite{barrett2007information}. Consequently, the value in Proposition~\ref{prop:app_separation} is attained by an adversary in a definite theory, and the guessing probability is an equality rather than merely an upper bound.

We recall the general framework of Ref.~\cite{barrett2007information}. A GPT system is specified by a convex state space $\Omega$ in a real vector space. Measurements are sets of linear functionals (effects) that are nonnegative on $\Omega$ and sum to the unit effect. Transformations are linear maps between the vector spaces of two systems that send the state space of the input system into that of the output system (a given theory may allow all such maps or only a subset of them). Without loss of generality, when a system is described through a set of fiducial measurements, its state is identified with the table of outcome probabilities of those measurements, so two preparations with the same table are the same state. For composite systems the framework admits several rules. The most permissive is the maximal tensor product, in which the joint state space $\Omega_1\otimes_{\max}\Omega_2$ consists of all no-signaling tables of joint fiducial probabilities whose conditional states, obtained by conditioning on the outcome of a fiducial measurement on one system, lie in the state space of the other.

These are the only ingredients we use. The theory constructed below makes three specific choices within this framework: the state spaces of Bob's and Eve's systems, the composition rule (we take the maximal tensor product), and the set of allowed transformations, which we take to include the map $\mathcal T$ defined below. 

Fix an optimal family $\{P^{y'}_\star\}_{y'\in\mathcal Y}$ and define, for each $x$, the joint table
\begin{equation}
    \omega_x(b,e|y,y') \;:=\; P^{y'}_\star(b,e|x,y).
    \label{eq:app_omega}
\end{equation}
Now, we construct three systems:

\emph{Bob's system.} Fiducial measurements labelled by $y\in\mathcal Y$ with outcomes $b$. Its state space $\Omega_B$ is the set of tables $r(b|y)$ that are nonnegative, normalized for each $y$, and satisfy every measurement OPEQ, $\sum_{b,y}\beta_{by}r(b|y)=\sum_{b,y}\beta'_{by}r(b|y)$. Bob's effects are the fiducial effects $r\mapsto r(b|y)$. Because the OPEQs hold on all of $\Omega_B$, each measurement OPEQ is an identity between effects of this system, exactly as required in Sec.~\ref{sec:contextuality}.

\emph{Eve's system.} Fiducial measurements labelled by $y'\in\mathcal Y$ with outcomes $e$, and state space $\Omega_E$ the set of all nonnegative normalized tables $s(e|y')$. This is a generalized bit in the sense of Ref.~\cite{barrett2007information} with $|\mathcal Y|$ fiducial measurements.

\emph{Alice's system.} Consider an independent vector per preparation, the standard basis $\{e_x\}$ of $\mathbb R^{\mathcal X}$, and identify precisely those combinations that the preparation OPEQs declare equal. Concretely, let $W\subseteq\mathbb R^{\mathcal X}$ be the span of the vectors $\alpha-\alpha'$ over all preparation OPEQs, define $V_A=\mathbb R^{\mathcal X}/W$, and let $\omega^A_x=[e_x]$ be the equivalence class of $e_x$. The state space $\Omega_A$ is the convex hull of $\{\omega^A_x\}$. By construction, a linear combination $\sum_x c_x\,\omega^A_x$ vanishes if and only if $\sum_x c_x e_x\in W$, so the linear relations among Alice's states are exactly the declared OPEQs and their linear consequences. 

We now verify that $\omega_x\in\Omega_B\otimes_{\max}\Omega_E$. No-signaling from $y'$ to $b$ is Eq.~\eqref{eq:app_F_marg}: $\sum_e\omega_x(b,e|y,y')=p_{\mathrm{obs}}(b|x,y)$ for every $y'$. No-signaling from $y$ to $e$ is Eq.~\eqref{eq:app_F_NS} applied to each slice. Conditioning on Eve's outcome $e$ of readout $y'$ gives Bob's conditional table $r^{x}_{e,y'}(b|y)\propto P^{y'}_\star(b,e|x,y)$, which is well defined because the marginal $\sum_b P^{y'}_\star(b,e|x,y)$ is independent of $y$ by Eq.~\eqref{eq:app_F_NS}, and which satisfies the measurement OPEQs by Eq.~\eqref{eq:app_F_meas} and hence, lies in $\Omega_B$. Conditioning on Bob's outcome gives a normalized table $s^{x}_{b,y}(e|y')\propto P^{y'}_\star(b,e|x,y)$ for Eve, which lies in $\Omega_E$ trivially. This holds for every $x$, so each $\omega_x$ is a valid joint state.

Finally, define Eve's intervention as the linear map $\mathcal T:V_A\to V_B\otimes V_E$ by $\mathcal T(\omega^A_x)=\omega_x$. Two properties make $\mathcal T$ a legitimate transformation. First, it is well defined on $V_A$: since $\omega^A_x$ is only defined up to elements of $W$, we need $\mathcal T$ to vanish on $W$, i.e.\ $\sum_x(\alpha_x-\alpha'_x)\,\omega_x=0$ for every preparation OPEQ, which is precisely Eq.~\eqref{eq:app_F_prep} applied to each slice. Second, it maps states to states: any state of Alice's system is a mixture $\sum_x p_x\,\omega^A_x$, whose image $\sum_x p_x\,\omega_x$ is a mixture of valid joint states and hence, a valid joint state itself because the joint state space is convex. Operationally, $\mathcal T$ is the statement that Eve takes the system Alice sent and outputs two systems, one forwarded to Bob and one she retains, jointly in the state $\omega_x$. It depends neither on $y$ nor on $y'$. These enter only afterwards, as the fiducial measurements that Bob and Eve respectively choose to perform on their halves. Every optimal slice $P^{y'}_\star$ is thereby recovered as one measurement, $y'$, on Eve's half of one and the same state, which is precisely what shows that the slices coexist within a single theory. 

The theory in which Alice's system, Bob's system, Eve's system and the transformation $\mathcal T$ are all allowed operations is a GPT in the sense of Ref.~\cite{barrett2007information}, and in it the protocol with Alice preparing $\omega^A_x$, Eve applying $\mathcal T$, Bob performing fiducial measurement $y$ and Eve performing fiducial readout $y'=y$ produces exactly the behaviour $p(b,e|x,y,y')=P^{y'}_\star(b,e|x,y)$, so the guessing probability of Proposition~\ref{prop:app_separation} is achieved.

It is important to consider two remarks. First, the construction does not claim that this GPT is the theory describing the actual devices. It shows that the maximally permissive adversarial model coincides with the set of attacks available in \emph{some} GPT respecting the OPEQs, so that the LP value is tight over that class rather than a loose relaxation. Second, restricting Eve to a more structured theory, quantum mechanics for instance, can only remove elements from $\Omega_B\otimes_{\max}\Omega_E$ and from the allowed transformations, and therefore can only decrease her optimal guessing probability.

\section{Hidden Eve-correlated randomness}
\label{app:hidden_randomness}

In this appendix, we illustrate why the operational equivalences used in our security analysis must hold before averaging over any uncharacterized variable correlated with Eve. We do so by showing an explicit attack exploiting this loophole in the CHSH prepare-and-measure protocol of the main text.

In this protocol, we have four preparations $P_x$, $x\in\{0,1,2,3\}$, satisfying
\begin{equation}
    \frac{1}{2}(P_0+P_1)\simeq\frac{1}{2}(P_2+P_3).
    \label{eq:B1}
\end{equation}
We identify the label of the preparation with its binary representation, i.e., we use two bits to indicate the preparation, $x=(x_0,x_1)$.

Now, suppose that Alice's preparation device contains an uncharacterized, uniformly random pair of bits $\lambda=(\lambda_0,\lambda_1)$, which is correlated with Eve. In addition to the intended system $Q$, prepared in the state $\rho^Q_x$, the device emits an additional classical degree of freedom $S$ encoding $(x_0\oplus\lambda_0,\,x_1\oplus\lambda_1)$. The physical preparation leaving Alice's laboratory is therefore
\begin{equation}
    P^{(\lambda)}_x=\rho^Q_x\otimes|x_0\oplus\lambda_0,\,x_1\oplus\lambda_1\rangle\langle x_0\oplus\lambda_0,\,x_1\oplus\lambda_1|^{S}.
    \label{eq:B2}
\end{equation}
Both bits of the preparation label are thus one-time padded by $\lambda$.

If Alice's characterization does not resolve $\lambda$, averaging over the hidden bits preserves the operational equivalence Eq.~\eqref{eq:B1}, and it does so for any test on the full carrier, including those acting jointly on $Q$ and $S$. Indeed, since $\lambda$ is uniform, the averaged state of $S$ is maximally mixed and independent of $x$,
\begin{equation}
    \sum_\lambda q(\lambda)\,P^{(\lambda)}_x=\rho^Q_x\otimes\frac{\mathds{1}^S}{4},
    \label{eq:B3a}
\end{equation}
so that Eq.~\eqref{eq:B1} on the carrier reduces to the equivalence of the intended components,
\begin{equation}
    \frac{1}{2}(\rho^Q_0+\rho^Q_1)=\frac{1}{2}(\rho^Q_2+\rho^Q_3).
    \label{eq:B3}
\end{equation}
Bob may furthermore act only on $Q$, so the observed probabilities $p(b|x,y)$ remain exactly those of the intended prepare-and-measure CHSH implementation.

For Eve, however, the two mixtures are not operationally equivalent. Since she knows $\lambda$, reading $S$ allows her to reconstruct the complete preparation label,
\begin{equation}
    x_0=S_0\oplus\lambda_0,\qquad x_1=S_1\oplus\lambda_1.
    \label{eq:B4}
\end{equation}
Once Bob announces his setting $y$, she can therefore compute Alice's raw key value directly,
\begin{equation}
    K=\kappa(x,y),
    \label{eq:B5}
\end{equation}
and hence
\begin{equation}
    P_{\mathrm{guess}}(K|E,Y)=1.
    \label{eq:B6}
\end{equation}
Thus, an operational equivalence that holds only after averaging over an Eve-correlated variable does not constrain Eve's side information, even though Alice and Bob observe the expected prepare-and-measure statistics, and even if their characterization of the carrier is otherwise tomographically complete.

More generally,
\begin{equation}
    \sum_\lambda q(\lambda)\sum_x\alpha_x P^{(\lambda)}_x
    \simeq
    \sum_\lambda q(\lambda)\sum_x\alpha'_x P^{(\lambda)}_x
    \label{eq:B7}
\end{equation}
is insufficient when $\lambda$ is available to Eve but hidden to Alice. Instead, the equivalence must hold conditioned on the variable,
\begin{equation}
    \sum_x\alpha_x P^{(\lambda)}_x
    \simeq
    \sum_x\alpha'_x P^{(\lambda)}_x
    \qquad\forall\,\lambda.
    \label{eq:B8}
\end{equation}
In the example above Eq.~\eqref{eq:B7} holds while Eq.~\eqref{eq:B8} fails, since for fixed $\lambda$ the two mixtures carry different values of $S_0$.

Importantly, this does not exclude ordinary shared or public randomness. Such variables may be included in the operational description and the OPEQs verified conditioned on their values. What is excluded is an uncharacterized Eve-correlated degree of freedom whose distinguishability is hidden only by averaging.

\section{Contextuality as a necessary condition for key generation}
\label{app:ncontextual_zero_key}

In this Appendix we prove Theorem~\ref{theorem} from section~\ref{sec:theorem} which states that contextuality is a necessary condition for obtaining a positive asymptotic key rate within the adversarial model considered in this work. More precisely, if the observed prepare-and-measure experiment, including its operational equivalences, admits a generalized noncontextual ontological model, then there exists an admissible individual attack for Eve under which no secret key can be distilled.

\begin{proof}
Suppose that the observed experiment admits a generalized noncontextual ontological model. Then, there exists an ontic space $\Lambda$, preparation distributions $\mu(\lambda|P_x)$, and measurement response functions $\xi(b|M_y,\lambda)$ such that the observed statistics are reproduced,
\begin{equation}
    p_{\mathrm{obs}}(b|x,y)
    =
    \sum_{\lambda\in\Lambda}
    \mu(\lambda|P_x)\,
    \xi(b|M_y,\lambda),
    \label{eq:app_ncom_model}
\end{equation}
and all specified OPEQs hold at the ontological level,
\begin{align}
    \sum_x \alpha_x\mu(\lambda|P_x)
    &=
    \sum_x \alpha'_x\mu(\lambda|P_x)
    \qquad\forall\,\lambda,
    \label{eq:app_ncom_prep}\\
    \sum_{b,y}\beta_{by}\xi(b|M_y,\lambda)
    &=
    \sum_{b,y}\beta'_{by}\xi(b|M_y,\lambda)
    \qquad\forall\,\lambda.
    \label{eq:app_ncom_meas}
\end{align}
For a finite prepare-and-measure scenario with finitely many OPEQs, a noncontextual model exists if and only if one exists with finitely many ontic states~\cite{schmid2018all}. We therefore take $|\Lambda|<\infty$ without loss of generality.

Now, we construct an explicit adversarial extension in which Eve's outcome is the ontic state itself,
\begin{equation}
    p(b,e{=}\lambda|x,y)
    =
    \mu(\lambda|P_x)\,
    \xi(b|M_y,\lambda).
    \label{eq:app_lambda_extension}
\end{equation}
This construction does not require $\lambda$ to be an operationally accessible variable in the physical realization of the experiment. Eq.~\eqref{eq:app_lambda_extension} simply defines a behaviour, and the security analysis must hold for every behaviour compatible with the observed statistics, the OPEQs and the causal structure of the attack model. Hence, it suffices to show that this one is admissible. Since the extension does not depend on Eve's readout choice, it is valid both in the fixed-readout and in the delayed-readout models of Appendix~\ref{app:fixed_setting}.

Positivity and normalization are straightforward. Marginalizing over Eve reproduces the observed statistics (by using Eqs.~\ref{eq:app_ncom_model} and \ref{eq:app_lambda_extension}),

\begin{equation}
    \sum_{\lambda}
    p(b,\lambda|x,y)=p_{\mathrm{obs}}(b|x,y),
\end{equation}
and the no-signaling condition from Bob's setting to Eve's outcome holds, because $\sum_b p(b,\lambda|x,y)=\mu(\lambda|P_x)$ is independent of $y$. Eq.~\eqref{eq:app_ncom_prep} implies
\begin{align}
    \sum_x\alpha_x p(b,\lambda|x,y)
    &=
    \xi(b|M_y,\lambda)
    \sum_x\alpha_x\mu(\lambda|P_x)
    \nonumber\\
    &=
    \xi(b|M_y,\lambda)
    \sum_x\alpha'_x\mu(\lambda|P_x)
    \nonumber\\
    &=
    \sum_x\alpha'_x p(b,\lambda|x,y),
\end{align}
so every preparation OPEQ is satisfied for each value of Eve's outcome. Likewise, Eq.~\eqref{eq:app_ncom_meas} gives
\begin{align}
    \sum_{b,y}\beta_{by}p(b,\lambda|x,y)
    &=
    \mu(\lambda|P_x)
    \sum_{b,y}\beta_{by}\xi(b|M_y,\lambda)
    \nonumber\\
    &=
    \mu(\lambda|P_x)
    \sum_{b,y}\beta'_{by}\xi(b|M_y,\lambda)
    \nonumber\\
    &=
    \sum_{b,y}\beta'_{by}p(b,\lambda|x,y),
\end{align}
so every measurement OPEQ is satisfied as well. The extension Eq.~\eqref{eq:app_lambda_extension} is therefore admissible in our adversarial model.

Since $\lambda$ determines Bob's response distribution, Eve can in particular simulate Bob's outcome distribution, and hence guess the raw key at least as well as Bob does. More formally, conditioned on $\lambda$ and $y$, Bob's outcome is generated solely by $\xi(b|M_y,\lambda)$ and is independent of Alice's preparation label; since the settings are chosen independently of everything else, the joint distribution factorizes as $q(x)\,q(y)\,\mu(\lambda|P_x)\,\xi(b|M_y,\lambda)$, and hence
\begin{equation}
    I(X;B|\Lambda,Y)=0 .
    \label{eq:app_conditional_independence}
\end{equation}
Because the attack is individual, the rounds form an i.i.d.\ source in which Alice holds $(X,Y)$, Bob holds $(B,Y)$ and Eve holds $(\Lambda,Y)$. For such a source, the secret-key rate $S(X;B\,\|\,\Lambda,Y)$, that is, the optimal rate at which Alice and Bob can distill key by authenticated public discussion, one-way or two-way, against an adversary holding $(\Lambda,Y)$, is upper bounded by the conditional mutual information~\cite{maurer1993secret},
\begin{equation}
    S\big(X;B\,\|\,\Lambda,Y\big)
    \leq
    I(X;B|\Lambda,Y)
    =0 .
\end{equation}

We conclude that any prepare-and-measure experiment admitting a generalized noncontextual ontological model with respect to the specified OPEQs admits an adversarial extension with zero secret-key capacity. Within the individual-attack adversarial model considered here, contextuality is therefore a necessary condition for a positive secret-key rate.
\end{proof}

\section{Quantum versus post-quantum adversaries: the $(3,2)$-PORAC.}
\label{app:PORAC}
In this Appendix, we have an explicit explanation of the $(3,2)$-PORAC protocol used in Ref.~\cite{chaturvedi2021characterising}, where they study contextuality-based QKD with a quantum adversary. This example demonstrates that the adversarial model considered here is strictly stronger than restricting Eve to quantum mechanics as we obtain no positive key rate.

In this protocol, Alice chooses a uniformly random three-bit string $x=(x_0,x_1,x_2)\in\{0,1\}^3$ and prepares the system accordingly in one of eight possible preparations while Bob has three possible settings $y\in\{0,1,2\}$. Then, his goal is to recover the corresponding bit $x_y$. Parity obliviousness requires the communicated system to contain no information about the parity of any subset of two or more of Alice's bits, giving rise to preparation operational equivalences. Concretely,
\begin{equation}
\begin{aligned}
    \frac{1}{4} (P_{000}+P_{011}+P_{101}+P_{110}) \simeq \\
    \frac{1}{4} (P_{001}+P_{010}+P_{100}+P_{111}),
    \label{eq:porac_opeq_example}
\end{aligned}
\end{equation}
\begin{equation}
    \frac{1}{2} (P_{00x_2}+P_{11x_2}) \simeq
    \frac{1}{2} (P_{01x_2}+P_{10x_2})  \qquad \forall x_2,
    \label{eq:porac_opeq_example2}
\end{equation}

with analogous equivalences for the other pairs of bits. A quantum realization satisfying these operational equivalences and achieving the maximum quantum bound for this task is depicted in Fig.~\ref{fig:porac3}. The operational equivalences for the measurements in this case are 
\begin{equation}
    \frac{1}{2}\left(M^0_0+M^1_0\right) \simeq
    \frac{1}{2}\left(M^0_1+M^1_1\right) \simeq
    \frac{1}{2}\left(M^0_2+M^1_2\right).
    \label{eq:porac_meas_opeq}
\end{equation}
which are just the trivial ones,  expressing nothing beyond the normalization of each measurement.

\begin{figure}[h!]
    \centering
    \usetikzlibrary{shapes.gates.logic.US, positioning}

    \definecolor{boxblue}{RGB}{125, 206, 245}

    \resizebox{1\columnwidth}{!}{%
    \begin{tikzpicture}[
  >={Latex[length=1.3mm]},
  x={(-0.7388cm,-0.2922cm)},
  y={( 0.9456cm,-0.2284cm)},
  z={( 0cm     , 1.1413cm)},
  st/.style={circle,fill=blue!60!black,inner sep=1.0pt},
  ef/.style={circle,fill=red!65!black,inner sep=1.0pt},
  ax/.style={gray!60,very thin,->},
  wire/.style={gray!80,densely dotted},
  eqb/.style={black,very thin,dashed},
  eqf/.style={black,very thin},
  lbl/.style={font=\scriptsize,text=black},
  tny/.style={font=\tiny},
  ttl/.style={font=\scriptsize\bfseries},
]
 
\def\q{0.5774}   %
\def\D{3.85}     %
\def\Ax{1.95}   %
\def\Ay{1.60}   %
\def\Az{1.60}   %
 
\begin{scope}
  \draw[eqb] plot[domain=128:308,samples=40,variable=\t] ({cos(\t)},{sin(\t)},0);
  \draw[eqf] plot[domain=-52:128,samples=40,variable=\t] ({cos(\t)},{sin(\t)},0);
  \draw[thin] (0,0,0) circle (1.2cm);
 
  \draw[ax] (0,0,0) -- (\Ax,0,0) node[lbl,below=-2pt]{$X$};
  \draw[ax] (0,0,0) -- (0,\Ay,0) node[lbl,right=-1pt]{$Y$};
  \draw[ax] (0,0,0) -- (0,0,\Az) node[lbl,above=-2pt]{$Z$};
 
  \draw[wire] (\q,\q,\q) -- (\q,\q,-\q) -- (\q,-\q,-\q) -- (\q,-\q,\q) -- cycle;
  \draw[wire] (-\q,\q,\q) -- (-\q,\q,-\q) -- (-\q,-\q,-\q) -- (-\q,-\q,\q) -- cycle;
  \draw[wire] (\q,\q,\q) -- (-\q,\q,\q);
  \draw[wire] (\q,\q,-\q) -- (-\q,\q,-\q);
  \draw[wire] (\q,-\q,\q) -- (-\q,-\q,\q);
  \draw[wire] (\q,-\q,-\q) -- (-\q,-\q,-\q);
 
  \node[st] at (\q,\q,\q) {};      \node[st] at (\q,\q,-\q) {};
  \node[st] at (\q,-\q,\q) {};     \node[st] at (\q,-\q,-\q) {};
  \node[st] at (-\q,\q,\q) {};     \node[st] at (-\q,\q,-\q) {};
  \node[st] at (-\q,-\q,\q) {};    \node[st] at (-\q,-\q,-\q) {};
 
  \node[ttl] at (0,0,2.15) {Preparations};
\end{scope}
 
\begin{scope}[xshift=\D cm]
  \draw[eqb] plot[domain=128:308,samples=40,variable=\t] ({cos(\t)},{sin(\t)},0);
  \draw[eqf] plot[domain=-52:128,samples=40,variable=\t] ({cos(\t)},{sin(\t)},0);
  \draw[thin] (0,0,0) circle (1.2cm);
 
  \draw[ax] (0,0,0) -- (\Ax,0,0) node[lbl,below=-2pt]{$M_{0}$};
  \draw[ax] (0,0,0) -- (0,\Ay,0) node[lbl,right=-1pt]{$M_{1}$};
  \draw[ax] (0,0,0) -- (0,0,\Az) node[lbl,above=-2pt]{$M_{2}$};
 
  \draw[wire] (1,0,0) -- (0,1,0) -- (-1,0,0) -- (0,-1,0) -- cycle;
  \draw[wire] (1,0,0) -- (0,0,1);   \draw[wire] (0,1,0) -- (0,0,1);
  \draw[wire] (-1,0,0) -- (0,0,1);  \draw[wire] (0,-1,0) -- (0,0,1);
  \draw[wire] (1,0,0) -- (0,0,-1);  \draw[wire] (0,1,0) -- (0,0,-1);
  \draw[wire] (-1,0,0) -- (0,0,-1); \draw[wire] (0,-1,0) -- (0,0,-1);
 
  \node[ef] at (1,0,0) {};   \node[ef] at (-1,0,0) {};
  \node[ef] at (0,1,0) {};   \node[ef] at (0,-1,0) {};
  \node[ef] at (0,0,1) {};   \node[ef] at (0,0,-1) {};
 
  \node[ttl] at (0,0,2.15) {Measurements};
\end{scope}
 
\end{tikzpicture}
    }

    \caption{The $3\!\to\!1$ parity-oblivious random access code. The eight preparations $P_{x_{0}x_{1}x_{2}}$ sit at the vertices of a cube inscribed in the Bloch sphere, with Bloch vector $\bigl((-1)^{x_{0}},(-1)^{x_{1}},(-1)^{x_{2}}\bigr)/\sqrt{3}$. The three measurements $M_{y}$ are the $X$, $Y$ and $Z$ bases, whose six effects form an octahedron. %
}
    \label{fig:porac3}
\end{figure}

This communication task naturally defines a key-distribution protocol. After Bob announces $y$, Alice retains the correct bit, i.e. $\kappa(x,y)=x_y$, while Bob uses his outcome $b$ as his estimate of the same raw key bit. This is precisely the key assignment considered in
Ref.~\cite{chaturvedi2021characterising}. There, restricting the adversarial extension to quantum theory leads to a positive secret-key rate for sufficiently high PORAC success probability. Concretely, for the maximum quantum value, they obtain approximately $r \approx 0.255$. In contrast, when Eve has access to arbitrary post-quantum extensions constrained only by the operational equivalences, our optimization gives no positive key rate. This example explicitly demonstrates the additional power granted to the adversary when the assumption of quantum mechanics is removed.

\section{Table of lower bounds on the key rates for different protocols}
\label{app:keyrun_tables}

In this appendix we collect in Table~\ref{tab:keyrun_optimum_alice} the best asymptotic lower bounds on the key rates we have found for a range of prepare-and-measure contextuality scenarios, treated as key-distribution protocols with the master key held by Alice. All rates are the LP reverse-Fano bound of Sec.~\ref{sec:adversarial_math} (note that the guessing probability is at least $1/2$ in every case, so Eq.\eqref{eq:reverse_fano} coincides with the general reverse-Fano bound and is tight even for the non-binary alphabets) against an individual attack by an adversary constrained only by the operational equivalences of the scenario. The reported values are the rate per \emph{key-generating} round rather than per experimental round. We report this value because, as in the efficient variant of the CHSH QKD protocol of Ref.~\cite{acin2006efficient}, Alice and Bob may bias the sampling of settings towards those used for key generation, so that in the asymptotic limit the rate per experimental round approaches the rate per key-generating round while the remaining rounds still suffice to estimate the operational statistics. For each protocol we report the best rate we have found, together with the choice of preparations $x$ assigned to key generation for each measurement setting $y$ that attains it. These are the best values found by our search over preparation subsets. We do not claim that they are optimal, and in particular, for the larger scenarios, they should be read as lower bounds on the achievable rate. We do not describe the protocols here, that is, which preparations and measurements the labels $x$ and $y$ refer to in each scenario; these definitions, together with the code that produces every entry of the table, are available in the public repository~\cite{ContextualityQKDGitHub}. Original references for each scenario are given in the first column of the table.

Two remarks help to read the numbers in Table~\ref{tab:keyrun_optimum_alice}. First, the entry for the prepare-and-measure CHSH protocol reproduces a known result: for the ideal quantum statistics our linear program returns the same noiseless lower bounds on the key rate that Ref.~\cite{acin2006bell} obtained for a no-signaling eavesdropper performing individual attacks on the CHSH protocol. This is expected, since in that scenario the operational constraints are exactly the no-signaling constraints which are mirrored by the OPEQs in the prepare-and-measure scenario. Second, the rates of $0.75$ and $1$ for the Cabello~\cite{cabello1996bell} and Peres~\cite{peres1991two} scenarios should not be read as Eve being fully ignorant. In both cases the preparations used for key generation within a given setting form a set of four, so the key symbol carries up to 2 bits. A rate of $1$ bit corresponds, through the reverse Fano inequality, to a guessing probability of $1/2$ for a four-valued symbol, that is, Eve retains at most one of the two bits. Random guessing would instead give a guessing probability of $1/4$ and a rate of $2$ bits per key-generating round. What is notable about these examples is therefore not the absolute value but the adversary against which it holds, the no-signaling literature reaches one secret bit per round only in the limit of infinitely many settings of the chained Bell inequality~\cite{acin2006bell}, whereas here it is attained with a finite number of settings against an adversary constrained only by the operational equivalences of the scenario. Note that these are noiseless values, their robustness to noise is left for future work. 

\begin{table}[h]
\scriptsize
\setlength{\tabcolsep}{3pt}
\begin{tabular}{lcrl}
\hline\hline
Protocol & $X\times Y$ & Rate & Key preparations per setting \\
\hline
\begin{tabular}[t]{@{}l@{}}Prepare-and-\\measure CHSH \end{tabular} & $4\times2$ & 0.1062 & \begin{tabular}[t]{@{}l@{}} $y_{0}{:}\lbrace 0,1,2,3\rbrace$; \\ $y_{1}{:}\lbrace 0,1,2,3\rbrace$ \end{tabular} \\
aligned key CHSH & $6\times2$ & 0.5857 & $y_{0}{:}\lbrace 4,5\rbrace$ \\
Hexagon \cite{mazurek2016experimental} & $6\times3$ & 0.5000 & \begin{tabular}[t]{@{}l@{}} $y_{0}{:}\lbrace 0,3\rbrace$; $y_{1}{:}\lbrace 2,5\rbrace$; \\ $y_{2}{:}\lbrace 1,4\rbrace$ \end{tabular} \\
$(3,2)$-PORAC & $8\times3$ & 0 & none \\
\begin{tabular}[t]{@{}l@{}}Icosahedron /\\dodecahedron\\\cite[Fig.~3b]{zhang2026entanglement} \end{tabular} & $20\times6$ & 0.3325 & \begin{tabular}[t]{@{}l@{}} $y_{0}{:}\lbrace 0,4,9,11\rbrace$; \\ $y_{2}{:}\lbrace 0,2,17,19\rbrace$; \\ $y_{4}{:}\lbrace 0,1,13,15\rbrace$ \end{tabular} \\
Cabello 18-ray~\cite{cabello1996bell} & $18\times9$ & 0.7500 & \begin{tabular}[t]{@{}l@{}} $y_{0}{:}\lbrace 0,1,10,11\rbrace$; \\ $y_{1}{:}\lbrace 0,2,12,13\rbrace$; \\ $y_{2}{:}\lbrace 1,2,15,16\rbrace$; \\ $y_{3}{:}\lbrace 3,4,13,14\rbrace$; \\ $y_{4}{:}\lbrace 3,5,15,17\rbrace$; \\ $y_{5}{:}\lbrace 4,5,9,10\rbrace$; \\ $y_{6}{:}\lbrace 6,7,9,11\rbrace$; \\ $y_{7}{:}\lbrace 6,8,16,17\rbrace$; \\ $y_{8}{:}\lbrace 7,8,12,14\rbrace$ \end{tabular} \\
Peres 24-ray~\cite{peres1991two} & $24\times6$ & 1.0000 & \begin{tabular}[t]{@{}l@{}} $y_{0}{:}\lbrace 0,1,2,3\rbrace$; \\ $y_{1}{:}\lbrace 4,5,6,7\rbrace$; \\ $y_{2}{:}\lbrace 8,9,10,11\rbrace$; \\ $y_{3}{:}\lbrace 12,13,14,15\rbrace$; \\ $y_{4}{:}\lbrace 16,17,18,19\rbrace$; \\ $y_{5}{:}\lbrace 20,21,22,23\rbrace$ \end{tabular} \\
\hline\hline
\end{tabular}
\caption{Best asymptotic lower bounds on the key rate we have found, in bits per key-generating round, together with the preparations $x$ assigned to key generation for each measurement setting $y$ that attain it. These values result from a search over preparation subsets and are not claimed to be optimal. ``none'' means that no setting was found to yield a positive rate.}
\label{tab:keyrun_optimum_alice}
\end{table}

\end{document}